\DeclareMathAlphabet{\bm}{OML}{cmm}{b}{it}
\newtheorem{theorem}{Theorem}
\newtheorem{lemma}[theorem]{Lemma}
\newtheorem{definition}[theorem]{Definition}
\newtheorem{corollary}[theorem]{Corollary}
\newtheorem{remark}[theorem]{Remark}
\newtheorem{proposition}[theorem]{Proposition}
\newtheorem{example}[theorem]{Example}
\newcommand{\bol}[1]{\mathbf{#1}}
\newcommand{\textchange}[1]{#1}
\begin{document}

\title{
Interval Algorithm for Random Number Generation: Information Spectrum Approach\thanks{A part of this paper is submitted for possible presentation at
2019 IEEE Information Theory Workshop.}
}

\author{Shun Watanabe and Te Sun Han}


\maketitle
\begin{abstract}
The problem of exactly generating a general random process (target process) by using another general
random process (coin process) is studied. 
The performance of the interval algorithm, introduced by Han and Hoshi, is
analyzed from the perspective of information spectrum approach. 
When either the coin process or the target process has one point spectrum,
the asymptotic optimality of the interval algorithm among any random number generation algorithms is proved,
which demonstrates utility of the interval algorithm beyond the ergodic process. 
Furthermore, the feasibility condition of exact random number generation is also elucidated.
Finally, the obtained general results are illustrated by 
the case of generating a Markov process from another Markov process.
\end{abstract}

\section{Introduction} \label{section:introduction}

We revisit the problem of exactly generating a random process, termed  {\em target process},
from another random process, termed {\em coin process}.
This problem has a long history. In a seminal paper \cite{Neumann51}, von Neumann introduced an algorithm to generate 
the independently identically distributed (i.i.d.) binary unbiased process from
an i.i.d. binary biased process. 
Subsequently, his result was extended and refined in various directions \cite{Samuelson:68, HoeSim:70, Elias:72, Blum86, Peres:92}. 
On the other hand, Knuth and Yao \cite{KnuYao:76} studied the problem of generating an arbitrary target process
using i.i.d. unbiased coin process. Later, the problem of generating an arbitrary target process from an arbitrary coin process 
was studied by various researchers \cite{Roche:91, Abrahams:96}. 
For instance, by generalizing the approach in \cite{KnuYao:76}, Abrahams proposed an algorithm to generate an arbitrary target process  
from an i.i.d. biased (not necessarily binary) coin process \cite{Abrahams:96}; however, this algorithm is only applicable to
the algebraic coin, i.e., the case where the probabilities of coin random variable is described by the root of a polynomial equation.  
In this paper, we focus on
the {\em interval algorithm} proposed in \cite{HanHoshi97}. The interval algorithm is constructive, and
it can be applied to any coin/target processes that may have memory and may not be stationary nor ergodic.
Thus, it is of interest to identify under what circumstances the interval algorithm has the optimal performance.
In fact, despite simplicity of the algorithm, performance analysis of the interval algorithm is not straightforward.  

When the coin process is i.i.d., Han and Hoshi have shown that the interval algorithm 
asymptotically attains the optimal performance among any random number generation algorithm \cite{HanHoshi97};
more precisely, they have shown that the average stopping time of the coin process, i.e., the average number of coin tosses, of the interval
algorithm converges to the fundamental limit, which is given by the ratio between the entropy rates of the coin and target processes. 
Using representation of real numbers, Oohama refined Han and Hoshi's performance analysis of the interval algorithm \cite{oohama:11, Oohama:16}.

For i.i.d. coin processes, the performance of the interval algorithm is fairly well understood.
However, in practice, it is also desirable to use a coin process that has a memory, such as the Markov process.
When the coin process is Markov, the performance analysis of the interval algorithm become intractable. 
In fact, even though performance analysis of the interval algorithm for the Markov coin process was conducted in \cite{HanHoshi97, Oohama:16},
the analyses there do not guarantee asymptotic optimality. One of the motivations of this paper is to elucidate the performance of the interval algorithm
when the coin process is Markov. 
 
On the other hand, Uyematsu and Kanaya studied the overflow probability of the stopping time of the interval algorithm \cite{UyeKan99, UyeKan99b}.
In \cite{UyeKan99b}, they derived an exponential convergence rate of the overflow probability of the stopping time for i.i.d. processes.
In \cite{UyeKan99}, using the sample path approach \cite{Shields-book},
they derived almost sure convergence results on the stopping time for general coin/target processes; 
however, since their characterization is in terms of the quantities defined for sample path \cite{MurKan:99}, 
it is not immediately clear how to evaluate those quantities other than ergodic processes.
Moreover, they only analyzed the interval algorithm and did not discuss the optimality of the interval algorithm
among other random number generation algorithms.
Even though the almost sure convergence
analysis is of theoretical importance, the authors believe that the average performance analysis is preferable in practice
since it provides more insights on the finite length performance along the way of deriving asymptotic results.
It should be also pointed out that the almost sure convergence of stopping time does not immediately provide
performance guarantee of the average stopping time (cf.~Remark \ref{remark:implication-almost-sure}).
 
As a related problem to the above, the problem of random number generation with
approximation error has been extensively studied in the past few decades \cite{han:93, vembu:95, NagMiy:96, han:00, hayashi:08, AltWag:12, nomura:13, KumHay:17}. 
In such a direction of research, the information spectrum approach introduced in \cite{han:93, han:book} is successfully used to derive 
fairly general results. 

In this paper, we apply the information spectrum approach to the problem of exactly generating a random process
by another random process. First, we derive a converse bound on the overflow probability of the stopping time for 
any random number generation algorithms. Second, we derive an achievability bound on the overflow probability of the stopping time
that can be attained by the interval algorithm. Using these bounds, we examine the asymptotic optimality of the interval algorithm 
for general coin/target processes. For the criterion of the overflow probability of the stopping time, 
when either the coin or the target process has one point spectrum,
the optimality of the interval algorithm among any random number generation algorithms is proved.
For the average stopping time criterion, when the coin process has one point spectrum with an additional mild condition,
the optimality of the interval algorithm among any random number generation algorithms is proved. 
These results demonstrate the utility of the interval algorithm for non-stationary and/or non-ergodic processes. 
As a side result, we also elucidate the condition that exact random number generation is possible. 
Finally, we illustrate the obtained general results by the case of Markov coin/target processes.

The rest of the paper is organized as follows. In Section \ref{sec:problem}, we describe the 
problem formulation and derive a converse bound for any random number generation algorithm. 
In Section \ref{sec:performance-interval}, we derive an achievability bound for the interval algorithm.
In Section \ref{sec:asymptotic}, we conduct the asymptotic analysis. 
In Section \ref{Sec:connection}, we mention the connection between the variable-length
random number generation and the fixed-length random number generation.
We close the paper with discussion in Section \ref{sec:discussion}. 

\subsection*{Notation}

Throughout the paper, random variables (eg.~$X$) and their realizations (eg.~$x$) are denoted by capital
and lower case letters, respectively. The ranges of random variables are denoted by the respective calligraphic letters (eg.~${\cal X}$).
The probability distribution of random variable $X$ is denoted by $P_X$.
Similarly, $X^n=(X_1,\ldots,X_n)$ and $x^n=(x_1,\ldots,x_n)$ denote, respectively, a random vector and its realization
in the $n$th Cartesian product ${\cal X}^n$ of ${\cal X}$. We use the standard notations for information measures \cite{cover}, such as the entropy $H(X)$,
the min-entropy $H_{\min}(X) = \min_x \log \frac{1}{P_X(x)}$,
and the binary entropy function $h(t) = t \log (1/t) + (1-t)\log(1/(1-t))$ for $0\le t \le 1$. For a random process $\bm{X}=\{ X^n \}_{n=1}^\infty$, the spectral sup-entropy
and the spectral inf-entropy are denoted by  
\begin{align} \label{eq:notation-sup-entropy}
\overline{H}(\bm{X}) := \inf\bigg\{ \lambda : \lim_{n\to\infty} \Pr\bigg( \frac{1}{n} \log \frac{1}{P_{X^n}(X^n)} \ge \lambda \bigg) = 0 \bigg\}
\end{align}
and 
\begin{align} \label{eq:notation-inf-entropy}
\underline{H}(\bm{X}) := \sup\bigg\{ \lambda : \lim_{n\to\infty} \Pr\bigg( \frac{1}{n} \log \frac{1}{P_{X^n}(X^n)} \le \lambda \bigg) = 0 \bigg\},
\end{align}
respectively \cite{han:book}.
The sup-entropy rate is denoted by 
\begin{align} \label{eq:notation-entropy}
H(\bm{X}) := \limsup_{n\to\infty} \frac{1}{n}H(X^n),
\end{align}
and it coincides with 
the entropy rate if the limit exists. The base of $\log$ and $\exp$ is $2$ and the natural logarithm is denoted by $\ln$.

\section{Problem Formulation and Basic Results} \label{sec:problem}

In this section, we describe the problem formulation of random number generation 
with variable length coin tossing. Let $\bm{X} = \{ X^m = (X_1,\ldots,X_m) \}_{m=1}^\infty$ be a random process taking values in a finite 
set ${\cal X} = \{1,\ldots, M\}$, 
and let $\bm{Y} = \{ Y^n = (Y_1,\ldots,Y_n) \}_{n=1}^\infty$ be a random process taking values in a finite set ${\cal Y}=\{1,\ldots,N\}$.
Unless otherwise stated, the distributions $P_{X^m}$ and $P_{Y^n}$ of the processes can be arbitrary as long as they are consistent over time; \textchange{i.e.,
\begin{align*}
\sum_{x_{m+1}} P_{X^{m+1}}(x^m, x_{m+1}) = P_{X^m}(x^m)
\end{align*}
for every $m$ and $x^m \in {\cal X}^m$, and similarly for $P_{Y^n}$.}
We shall consider the problem of random number generation to
simulate the sequence of random variables $Y^n$ using outputs from the sequence of random variable $X^m$;
the former is referred to as {\em target process} and the latter is referred to as {\em coin process}.
Specifically, an algorithm of random number generation with variable length coin tossing is described by
a full $M$-ary tree of possibly infinite depth (see Example \ref{example:algorithm-tree} below); 
that is, it is described by a deterministic function
\begin{align} \label{eq:functional-algorithm}
\phi: \bigcup_{i=0}^\infty {\cal X}^i \to \{\bot\} \cup {\cal Y}^n
\end{align}
such that $\phi(x^i) \in {\cal Y}^n$ if $x^i$ corresponds to a leaf
and $\phi(x^i)=\bot$ otherwise, 
where $\bot$ is the null sequence and ${\cal X}^0 = \{ \bot \}$.
Let ${\cal L}_\phi$ be the set of all leaves, i.e.,
\begin{align*}
{\cal L}_\phi := \bigg\{ s \in \bigcup_{i=0}^\infty {\cal X}^i : \phi(s) \in {\cal Y}^n \mbox{ and for all proper prefix } s^\prime \mbox{ of } s,~\phi(s^\prime)=\bot \bigg\}.
\end{align*}
For a leaf $s=(x_1,\ldots,x_i) \in {\cal L}_\phi$, its depth $i$ is denoted by $|s|$.

For a given infinite sequence $x_1,x_2,\ldots$, starting with $i=0$, we output a symbol in ${\cal Y}^n$ by the following algorithm:
\begin{enumerate}
\item \label{general-algo:step1} If $\phi(x^i)$ is in ${\cal Y}^n$, output $\phi(x^i)$ and terminate;

\item \textchange{Set $i = i+1$}, and go back to Step \ref{general-algo:step1}.
\end{enumerate}
For performance analysis, it is convenient to consider the output
of the algorithm for an input sequence of finite length. By an abuse of notation, we denote the output of the above algorithm for a sequence $x^m$ by $\phi(x^m)$, i.e.,
$\phi(x^m) = y^n$ if the algorithm terminates with output $\phi(x^i)=y^n$ for some $i \le m$, and $\phi(x^m)=\bot$ otherwise. 
The stopping time of the algorithm, i.e., the minimum integer $m \ge 0$ such that $\phi(X^m) \in {\cal Y}^n$, is denoted by $T$;
\textchange{note that the stopping time $T$ is the random variable that is induced by the algorithm and the coin process $\bm{X}$.}
For any fixed length $n$ of the target process, we require that the probability law of the output of the algorithm coincides with $P_{Y^n}$ exactly as $m\to \infty$, i.e.,
\begin{align} \label{eq:validity}
\sum_{s \in {\cal L}_\phi: \atop \phi(s)=y^n} P_{X^{|s|}}(s) = \lim_{m\to\infty} \Pr( \phi(X^m) = y^n ) = P_{Y^n}(y^n)
\end{align}
for every $y^n \in {\cal Y}^n$.

Note that the algorithm described as in \eqref{eq:functional-algorithm} outputs sequence $y^n \in {\cal Y}^n$ of length $n$ collectively. Practically,
it is also important to consider an algorithm that outputs symbol $y_j$ whenever it is ready; such an algorithm is termed a sequential algorithm.
We will consider a sequential version of the interval algorithm in the next section. It should be noted that, for a given sequential algorithm, we can describe  that algorithm
in the form of \eqref{eq:functional-algorithm} by pooling $y_1,\ldots,y_{n-1}$ until $y_n$ is ready to be output. Thus, the converse bound to be described later in this section
is also valid for sequential algorithms.

Let us illustrate the problem formulation by the following simple example.
\begin{example}[\cite{cover}] \label{example:algorithm-tree}
Let us consider generation of one symbol, i.e., $n=1$, of random variable with distribution $P_Y=(2/3,1/3)$ using
the i.i.d. sequence $\{X^m\}_{m=1}^\infty$ of unbiased binary random variables. For this example, by noting the binary expansions
\begin{align*}
\frac{2}{3} &= \sum_{i=1}^\infty 2^{-(2i-1)}, \\
\frac{1}{3} &= \sum_{i=1}^\infty 2^{-2i},
\end{align*}  
we can construct an algorithm with the tree described in Fig.~\ref{Fig:example-algorithm-tree}.
\end{example}

\begin{figure}[t]
\centering{
\includegraphics[width=0.4\textwidth]{./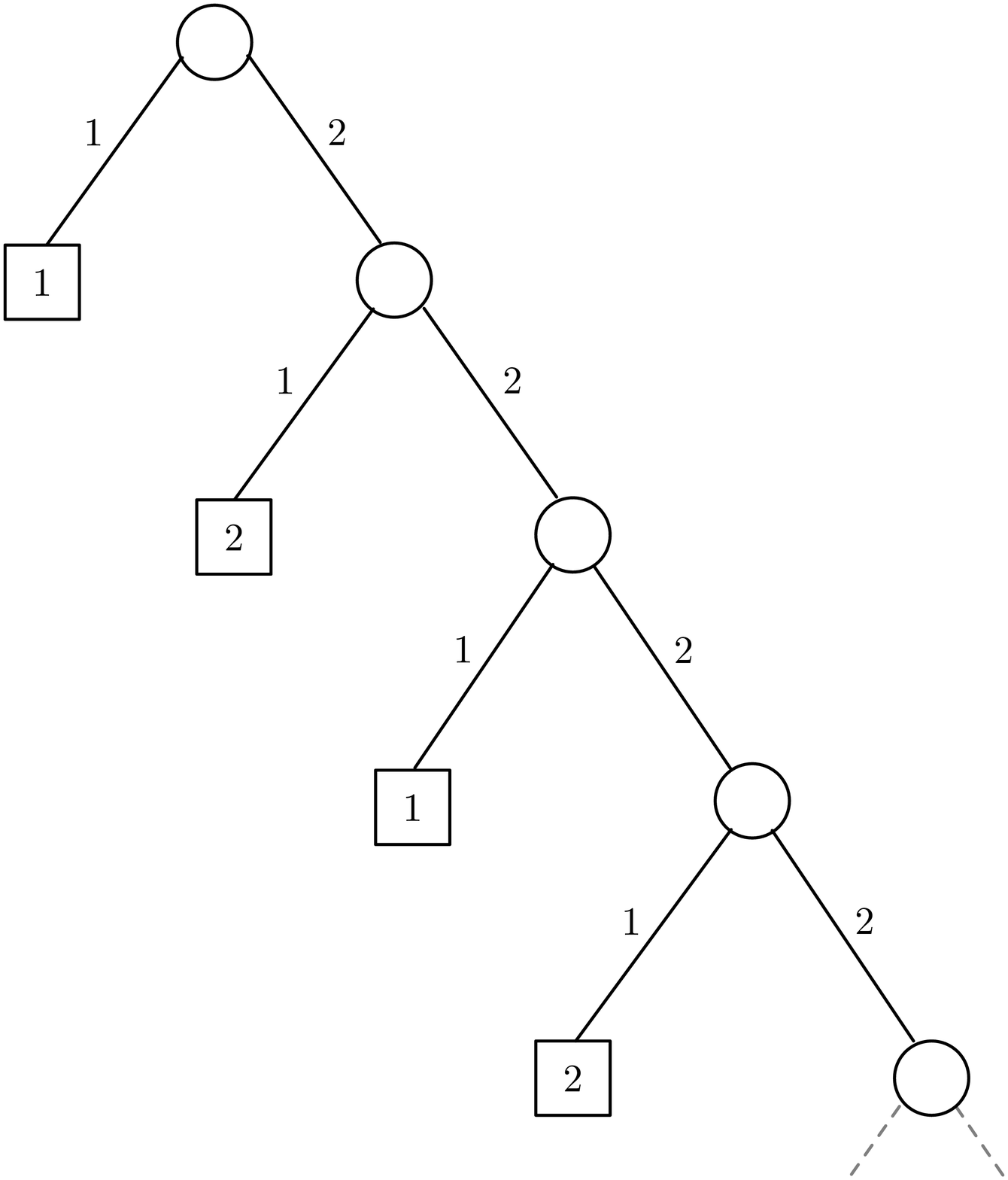}
\caption{A description of algorithm tree in Example \ref{example:algorithm-tree}. The label $1$ or $2$ of each edge indicates 
the outcome of coin random variable $X_i$. The circle node at depth $i$ indicates that the algorithm does not terminate 
when $X^i$ is observed; the square node at depth $i$ indicates that the algorithm terminates with the output labeled in that square node.}
\label{Fig:example-algorithm-tree}
}
\end{figure}

When the coin process is i.i.d. with common distribution $P_X$, there is a useful lower bound on the expected stopping time (cf.~\cite[Eq.~(2.4)]{HanHoshi97}):
\begin{align*}
\mathbb{E}[T] \ge \frac{H(Y^n)}{H(X)}.
\end{align*}
Since this lower bound is not available for general coin processes, the following lower bound
on the overflow probability of the stopping time $\Pr( T>m )$ 
is of importance in the latter sections; note that this lower bound is reminiscent of \cite[Lemma 2.1.2]{han:book}.

\begin{theorem} \label{theorem:converse}
For arbitrary random number generation algorithm \textchange{satisfying \eqref{eq:validity}} and integer $m \ge 0$, the overflow probability of the stopping time satisfies 
\begin{align}
\Pr( T > m ) &\ge P_{Y^n}({\cal T}_n^c(\tau)) - P_{X^m}({\cal S}_m(\lambda)) - 2^{-\tau+\lambda} \label{eq:single-shot-converse-1} \\
&= P_{X^m}({\cal S}_m^c(\lambda)) - P_{Y^n}({\cal T}_n(\tau)) - 2^{-\tau + \lambda} \label{eq:single-shot-converse-2}
\end{align}
for arbitrary real numbers $\tau,\lambda \ge 0$, where 
\begin{align}
{\cal S}_m(\lambda) &:= \bigg\{ x^m \in {\cal X}^m :  \log \frac{1}{P_{X^m}(x^m)} \ge \lambda \bigg\}, \label{eq:typical-X} \\
{\cal T}_n(\tau) &:= \bigg\{ y^n \in {\cal Y}^n : \log \frac{1}{P_{Y^n}(y^n)} \le \tau \bigg\}. \label{eq:typical-Y}
\end{align}
\end{theorem}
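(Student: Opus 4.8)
The plan is to prove the equivalent upper bound $\Pr(T \le m) \le P_{Y^n}({\cal T}_n(\tau)) + P_{X^m}({\cal S}_m(\lambda)) + 2^{-\tau+\lambda}$ and then pass to the overflow probability via $\Pr(T>m) = 1 - \Pr(T \le m)$. Note first that the two right-hand sides \eqref{eq:single-shot-converse-1} and \eqref{eq:single-shot-converse-2} are in fact identical after substituting $P_{Y^n}({\cal T}_n^c(\tau)) = 1 - P_{Y^n}({\cal T}_n(\tau))$ and $P_{X^m}({\cal S}_m^c(\lambda)) = 1 - P_{X^m}({\cal S}_m(\lambda))$: both reduce to $1 - P_{Y^n}({\cal T}_n(\tau)) - P_{X^m}({\cal S}_m(\lambda)) - 2^{-\tau+\lambda}$. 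Hence it suffices to establish the single upper bound on $\Pr(T \le m)$.

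The central object is the sub-distribution recording early termination: for each $y^n$ set $Q(y^n) := \sum_{s \in {\cal L}_\phi : |s| \le m,\, \phi(s) = y^n} P_{X^{|s|}}(s)$, the probability that the algorithm halts by time $m$ with output $y^n$. The validity condition \eqref{eq:validity} gives immediately $Q(y^n) \le P_{Y^n}(y^n)$, since the full all-depth sum equals $P_{Y^n}(y^n)$ and dropping the leaves with $|s| > m$ only decreases it; summing over outputs yields $\Pr(T \le m) = \sum_{y^n} Q(y^n)$. Using consistency of the coin process I would also record the identity $\sum_{x^m : x^{|s|}=s} P_{X^m}(x^m) = P_{X^{|s|}}(s)$, which lets me re-express $Q(y^n)$ as a sum over full-length strings $x^m$ whose halting prefix outputs $y^n$.

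Next I would split $\Pr(T \le m)$ according to whether the produced output lies in ${\cal T}_n(\tau)$ and whether the observed coin string lies in ${\cal S}_m(\lambda)$. Two of the resulting pieces are handled directly: the contribution of outputs in ${\cal T}_n(\tau)$ is $\sum_{y^n \in {\cal T}_n(\tau)} Q(y^n) \le P_{Y^n}({\cal T}_n(\tau))$, and the contribution on the event $X^m \in {\cal S}_m(\lambda)$ is trivially at most $P_{X^m}({\cal S}_m(\lambda))$. What remains is the cross term on the event $E := \{T \le m,\ \phi(X^T) \in {\cal T}_n^c(\tau),\ X^m \in {\cal S}_m^c(\lambda)\}$, and showing $\Pr(E) \le 2^{-\tau+\lambda}$ is where the real work lies.

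For the cross term I would combine a cardinality count with the validity bound. On one hand, every $x^m \in {\cal S}_m^c(\lambda)$ has $P_{X^m}(x^m) > 2^{-\lambda}$ by \eqref{eq:typical-X}, so there are fewer than $2^\lambda$ such strings; since each string deterministically produces a single output $\phi(x^T)$, at most $2^\lambda$ distinct values $y^n$ can occur on $E$. On the other hand, grouping $\Pr(E)$ by its output and applying the consistency identity, the mass attached to any fixed $y^n$ is at most $Q(y^n) \le P_{Y^n}(y^n)$, and for $y^n \in {\cal T}_n^c(\tau)$ this is strictly below $2^{-\tau}$ by \eqref{eq:typical-Y}. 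Multiplying the bound on the number of reachable outputs by the bound on each output's mass gives $\Pr(E) < 2^\lambda \cdot 2^{-\tau} = 2^{-\tau+\lambda}$, and combining the three pieces yields the desired upper bound on $\Pr(T \le m)$. The main obstacle is precisely this cross term: the insight is that the atypical-coin restriction ${\cal S}_m^c(\lambda)$ caps the number of reachable outputs while the atypical-output restriction ${\cal T}_n^c(\tau)$ caps the $Q$-mass of each, and these two caps multiply to exactly the slack term $2^{-\tau+\lambda}$.
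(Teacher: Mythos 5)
Your proof is correct and follows essentially the same route as the paper's: the same three-way decomposition by output typicality and coin typicality, the same cardinality bound $|{\cal S}_m^c(\lambda)| \le 2^\lambda$, and the same use of the validity condition to cap the atypical-output mass at $2^{-\tau}$, multiplying to the slack term $2^{-\tau+\lambda}$. The only differences are cosmetic: you work with the complementary event $\Pr(T \le m)$ and organize the cross-term double sum by outputs, whereas the paper lower-bounds $\Pr(T>m)$ directly and counts depth-$m$ leaves in ${\cal S}_m^c(\lambda)$.
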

\begin{proof}
Without loss of generality, we can assume that there is no leaf $s \in {\cal L}_\phi$ such that $|s| < m$;
otherwise, we can expand that leaf to depth $m$ without changing the overflow probability $\Pr(T > m)$.
Thus, we assume this assumption is satisfied in the rest of the proof.

Let 
\begin{align*}
{\cal B} := \big\{ s \in {\cal L}_\phi : |s|=m \big\}.
\end{align*}
Then, we can write
\begin{align*}
\Pr( T > m) = \sum_{ s \in {\cal B}^c} P_{X^{|s|}}(s),
\end{align*}
where ${\cal B}^c = {\cal L}_\phi \backslash {\cal B}$. Let 
\begin{align*}
{\cal C} := \big\{ s \in {\cal L}_\phi: \phi(s) \in {\cal T}_n^c(\tau)\big\}.
\end{align*}
Then, we have
\begin{align}
P_{Y^n}({\cal T}_n^c(\tau)) &= \sum_{s \in {\cal C}} P_{X^{|s|}}(s) \nonumber \\
&= \sum_{s \in {\cal B} \cap {\cal C}}  P_{X^m}(s) + \sum_{s \in {\cal B}^c \cap {\cal C}}  P_{X^{|s|}}(s) \nonumber \\
&\le \sum_{s \in {\cal B} \cap {\cal C}}  P_{X^m}(s) + \sum_{s \in {\cal B}^c}  P_{X^{|s|}}(s) \nonumber \\
&= \sum_{s \in {\cal B} \cap {\cal C}}  P_{X^m}(s) + \Pr( T > m), \label{proof:single-shot-converse-1}
\end{align}
where the first identity follows from \eqref{eq:validity}.
Furthermore, we have
\begin{align}
\sum_{s \in {\cal B} \cap {\cal C}}  P_{X^m}(s) &= \sum_{s \in {\cal B} \cap {\cal C} \cap {\cal S}_m(\lambda)}  P_{X^m}(s) + \sum_{s \in {\cal B} \cap {\cal C} \cap {\cal S}_m^c(\lambda)}  P_{X^m}(s) \nonumber \\
&\le P_{X^m}({\cal S}_m(\lambda)) + \sum_{s \in {\cal B} \cap {\cal C} \cap {\cal S}_m^c(\lambda)}  P_{X^m}(s) \nonumber \\
&\le P_{X^m}({\cal S}_m(\lambda)) + \sum_{s \in {\cal B} \cap {\cal C} \cap {\cal S}_m^c(\lambda)} P_{Y^n}(\phi(s)) \nonumber \\
&\le P_{X^m}({\cal S}_m(\lambda)) + \sum_{s \in {\cal B} \cap {\cal C} \cap {\cal S}_m^c(\lambda)} 2^{-\tau} \nonumber \\
&\le P_{X^m}({\cal S}_m(\lambda)) + 2^{-\tau+\lambda}, \label{proof:single-shot-converse-2}
\end{align}
where \textchange{the second inequality follows from $\phi(s) \in {\cal Y}^n$ for  $s \in {\cal C}$,}
the third inequality follows from $\phi(s) \in {\cal T}^c_n(\tau)$ for $s \in {\cal C}$,
and the last inequality follows from the bound $|{\cal S}_m^c(\lambda)| \le 2^\lambda$. By combining \eqref{proof:single-shot-converse-1} and \eqref{proof:single-shot-converse-2},
we obtain \eqref{eq:single-shot-converse-1}; then, \eqref{eq:single-shot-converse-2} follows from \eqref{eq:single-shot-converse-1}.
\end{proof}

\section{Performance of Interval Algorithm} \label{sec:performance-interval}

First, we review a sequential version of the interval algorithm.\footnote{Unlike the interval algorithm in \cite{HanHoshi97}, we output each symbol of
the target process sequentially; however, there is no difference in performance analyses.}
In the algorithm, we sequentially update intervals 
\begin{align*}
{\cal I}_s &:= [\underline{\alpha}_s,\overline{\alpha}_s), \\
{\cal J}_t &:= [\underline{\beta}_t,\overline{\beta}_t)
\end{align*}
induced by coin process and target process, respectively.
For the null sequence $s=t=\bot$, we initially set $\underline{\alpha}_s = \underline{\beta}_t = 0$ and $\overline{\alpha}_s = \overline{\beta}_t=1$.
For a given sequence $s \in {\cal X}^i$ and $x \in {\cal X}$, the interval of coin process is updated by
\begin{align*}
\underline{\alpha}_{sx} &:= \underline{\alpha}_s + ( \overline{\alpha}_s - \underline{\alpha}_s) P_{(x-1)|s}, \\
\overline{\alpha}_{sx} &:= \underline{\alpha}_s + ( \overline{\alpha}_s - \underline{\alpha}_s) P_{x|s},
\end{align*}
where 
\begin{align*}
P_{x|s} := \sum_{k=1}^x P_{X_{i+1}|X^i}(k|s)
\end{align*}
for $x \in {\cal X}$ and $P_{0|s}=0$. Similarly, for a given sequence $t \in {\cal Y}^j$ and $y \in {\cal Y}$, the interval of target process is updated by
\begin{align*}
\underline{\beta}_{ty} &:= \underline{\beta}_t + (\overline{\beta}_t - \underline{\beta}_t) Q_{(y-1)|t}, \\
\overline{\beta}_{ty} &:= \underline{\beta}_t + (\overline{\beta}_t - \underline{\beta}_t) Q_{y|t},
\end{align*}
where 
\begin{align*}
Q_{y|t} := \sum_{k=1}^y P_{Y_{j+1}|Y^j}(k|t)
\end{align*}
for $y \in {\cal Y}$ and $Q_{0|t}=0$. Using these intervals, the algorithm proceeds as follows:
\begin{enumerate}
\item Set $s=t=\bot$, $i=0$, and $j=1$.

\item \label{IntAlg-step2} If $[\underline{\alpha}_s, \overline{\alpha}_s) \subseteq [\underline{\beta}_{ty}, \overline{\beta}_{ty})$ for some $y \in {\cal Y}$, 
then output $y_j = y$ and go to Step \ref{IntAlg-step3}; otherwise, set $i=i+1$, $s = s x_i$, and repeat Step \ref{IntAlg-step2} again. 

\item \label{IntAlg-step3} If $j=n$, terminates; otherwise, set $t=ty_j$, $j=j+1$, and go to Step \ref{IntAlg-step2}.
\end{enumerate}

The following example illustrates a behavior of the interval algorithm for converting
a Markov process to an i.i.d. process.
\begin{example} \label{example:behavior}
Let the coin process $\{ X^m \}_{m=1}^\infty$ be the Markov chain induced by the transition matrix in Fig.~\ref{Fig:Markov}
with the stationary initial distribution $P_{X_1}(1)=P_{X_1}(2)=1/2$; let $Y^2 = (Y_1,Y_2)$ be $2$ symbols of i.i.d. random variables
with $P_{Y_j}(1)=1/3$ and $P_{Y_j}(2)=2/3$ for $j=1,2$. In this case, updates of the intervals are described in Fig.~\ref{Fig:Markov-update-interval}.
Also, the algorithm tree is described in Fig.~\ref{Fig:Markov-tree}. For instance, when $X_1=2$ is observed, the algorithm outputs
$Y_1=2$; then, if $(X_2,X_3)=(1,2)$ are observed after $X_1=2$, the algorithm outputs $Y_2=2$ and terminates. On the other hand, when 
$(X_1,X_2)=(1,2)$ are observed, the algorithm first outputs $Y_1=2$; then, outputs $Y_2=1$ without further observing the coin process. 
In the latter case, the node in the algorithm tree is labeled by two symbols $(2,1)$.
\end{example}

\begin{figure}[t]
\centering{
\includegraphics[width=0.4\textwidth]{./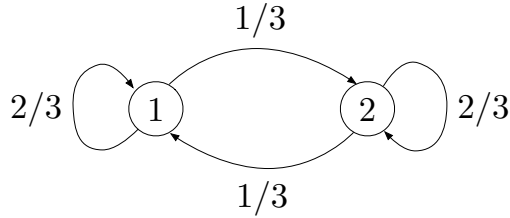}
\caption{A description of transition matrix in Example \ref{example:behavior}.}
\label{Fig:Markov}
}
\end{figure}

\begin{figure}[t]
\centering{
\includegraphics[width=0.5\textwidth]{./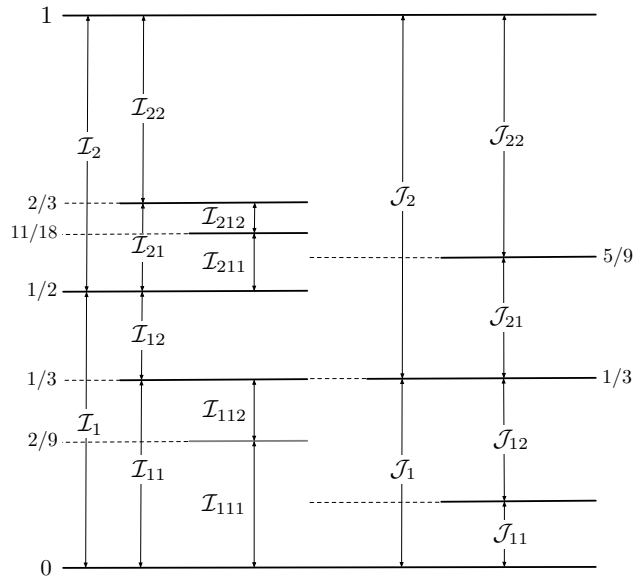}
\caption{A description of interval partitioning in Example \ref{example:behavior}.}
\label{Fig:Markov-update-interval}
}
\end{figure}

\begin{figure}[t]
\centering{
\includegraphics[width=0.6\textwidth]{./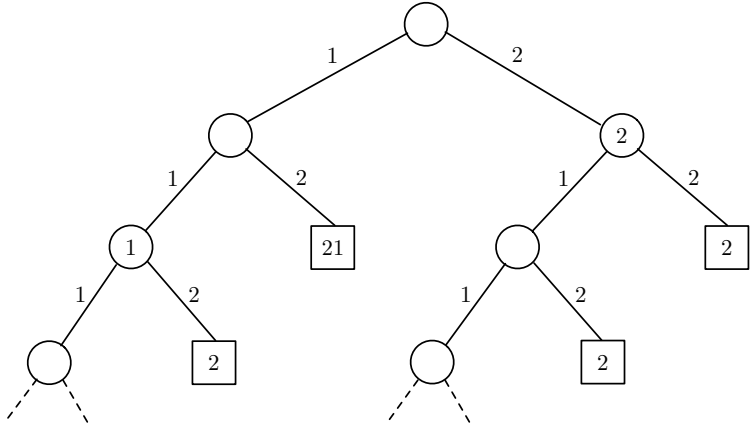}
\caption{A description of algorithm tree in Example \ref{example:behavior}. The label $1$ or $2$ 
of each edge indicates the outcome of coin random variable $X_i$. The label of each node indicates 
that that symbol(s) are output after observing the coin process up to that node. The square node indicates that
the algorithm terminates at the node.}
\label{Fig:Markov-tree}
}
\end{figure}

For notational convenience, we denote the function corresponding to the interval algorithm (cf.~\eqref{eq:functional-algorithm}) by $\phi_{\mathtt{int}}(\cdot)$.
Before verifying the validity of the algorithm (cf.~\eqref{eq:validity}) carefully,
we first examine the stopping time of the interval algorithm.
\begin{theorem} \label{theorem:performance-interval}
For the interval algorithm, the overflow probability of the stopping time satisfies
\begin{align*}
\Pr( T > m) \le P_{X^m}({\cal S}_m^c(\lambda)) + P_{Y^n}({\cal T}_n^c(\tau)) + 2^{-\lambda+\tau+1},
\end{align*}
where ${\cal S}_m(\lambda)$ and ${\cal T}_n(\tau)$ are defined as in \eqref{eq:typical-X} and \eqref{eq:typical-Y}, respectively. 
\end{theorem}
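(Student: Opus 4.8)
The plan is to recast the stopping condition geometrically and reduce $\Pr(T>m)$ to a statement about how the coin interval ${\cal I}_{X^m}$ sits inside the target partition. Recall from the recursive construction that $|{\cal I}_{x^m}| = P_{X^m}(x^m)$ and $|{\cal J}_{y^n}| = P_{Y^n}(y^n)$, and that the families $\{{\cal I}_{x^m}\}_{x^m}$ and $\{{\cal J}_{y^n}\}_{y^n}$ each partition $[0,1)$. The first step I would carry out is to show that, thanks to the nestedness ${\cal J}_{y^n} \subseteq {\cal J}_{y^j}$ of the target intervals along prefixes, the algorithm has emitted the full block $y^n$ within $m$ coin tosses if and only if ${\cal I}_{X^m} \subseteq {\cal J}_{y^n}$ for some $y^n$; hence $\{T > m\}$ coincides with the event that ${\cal I}_{X^m}$ contains a boundary point of the target partition in its interior, i.e.\ that it ``straddles'' two target intervals. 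Consequently $\Pr(T>m) = \sum_{x^m:\, {\cal I}_{x^m}\ \text{straddles}} |{\cal I}_{x^m}|$.

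Next I would split according to the typicality of the coin sequence. The contribution of atypical coin sequences is immediately controlled, $\Pr(T>m,\, X^m \in {\cal S}_m^c(\lambda)) \le P_{X^m}({\cal S}_m^c(\lambda))$, which produces the first term. It remains to bound the straddling probability carried by short intervals, namely $\sum |{\cal I}_{x^m}|$ over $x^m$ that straddle and satisfy $x^m \in {\cal S}_m(\lambda)$, so that $|{\cal I}_{x^m}| \le 2^{-\lambda}$. Here I would expand each such length as $\sum_{y^n} |{\cal I}_{x^m} \cap {\cal J}_{y^n}|$ and exchange the order of summation, reorganizing the bound as a sum over target blocks $y^n$ of the overlap contributed by the short straddling coin intervals.

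The crux is to bound, for each fixed $y^n$, the quantity $\sum |{\cal I}_{x^m} \cap {\cal J}_{y^n}|$ over short straddling $x^m$ in two complementary ways. On one hand, since the coin intervals are disjoint, these overlaps are disjoint subsets of ${\cal J}_{y^n}$, so the sum is at most $|{\cal J}_{y^n}| = P_{Y^n}(y^n)$. On the other hand, a straddling interval meeting ${\cal J}_{y^n}$ cannot be contained in it and must therefore contain $\underline{\beta}_{y^n}$ or $\overline{\beta}_{y^n}$; by disjointness there are at most two such intervals, each of length at most $2^{-\lambda}$, so the sum is also at most $2^{-\lambda+1}$. Applying the first bound to the atypical blocks $y^n \in {\cal T}_n^c(\tau)$ and the second to the typical blocks $y^n \in {\cal T}_n(\tau)$, and using $|{\cal T}_n(\tau)| \le 2^\tau$, the short straddling probability is at most $P_{Y^n}({\cal T}_n^c(\tau)) + 2^\tau \cdot 2^{-\lambda+1} = P_{Y^n}({\cal T}_n^c(\tau)) + 2^{-\lambda+\tau+1}$. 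Combining with the atypical-coin term yields the claimed inequality.

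I expect the main obstacle to be the geometric characterization together with the two-interval counting: verifying rigorously that $\{T>m\}$ is exactly the straddling event (the nontrivial direction needs all sequential outputs to complete within $m$ tosses, which rests on the nesting of the target intervals), and that at most two short straddling coin intervals can touch a given target interval. The remaining manipulations---expanding lengths into overlaps, swapping the summations, and invoking $|{\cal T}_n(\tau)| \le 2^\tau$---are routine once this geometric picture is in place.
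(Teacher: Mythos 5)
Your proposal is correct and follows essentially the same route as the paper's proof: the identification of $\{T>m\}$ with the event that ${\cal I}_{X^m}$ is contained in no ${\cal J}_{y^n}$, the use of $|{\cal I}_{x^m}|=P_{X^m}(x^m)$ and $|{\cal J}_{y^n}|=P_{Y^n}(y^n)$, and the key counting observation that at most two straddling coin intervals can meet any fixed target interval are exactly the ingredients the paper uses. The only (cosmetic) difference is organizational — you split on coin typicality first and then redistribute the straddling mass over target blocks via a summation swap, whereas the paper first isolates the coin intervals meeting some typical target interval (the set ${\cal E}_m$) and then splits on coin typicality — and both orderings yield the identical three-term bound.
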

\begin{proof}
Let 
\begin{align*}
{\cal D}_m := \big\{ x^m \in {\cal X}^m : \forall y^n \in {\cal Y}^n,~{\cal I}_{x^m} \not\subseteq {\cal J}_{y^n} \big\}
\end{align*}
and
\begin{align*}
{\cal E}_m := \big\{ x^m \in {\cal X}^m : \exists y^n \in {\cal T}_n(\tau) \mbox{ s.t. } {\cal I}_{x^m} \cap {\cal J}_{y^n} \neq \emptyset \big\}.
\end{align*}
Then, since the algorithm does not terminate after observing $x^m$ if and only if $x^m \in {\cal D}_m$,
the overflow probability can be rewritten as 
\begin{align}
\Pr( T > m) &= \sum_{x^m \in {\cal D}_m} P_{X^m}(x^m) \nonumber \\
&= \sum_{x^m \in {\cal D}_m \cap {\cal E}_m} P_{X^m}(x^m) + \sum_{x^m \in {\cal D}_m \cap {\cal E}_m^c} P_{X^m}(x^m) \nonumber \\
&\le \sum_{x^m \in {\cal D}_m \cap {\cal E}_m} P_{X^m}(x^m) + P_{Y^n}({\cal T}_n^c(\tau)), \label{eq:proof-int-bound-1}
\end{align}
where the inequality is justified as follows. Note that $x^m \in {\cal E}_m^c$ implies ${\cal I}_{x^m} \cap {\cal J}_{y^n} = \emptyset$ 
for every $y^n \in {\cal T}_n(\tau)$, which further implies 
\begin{align*}
\bigcup_{x^m \in {\cal E}_m^c} {\cal I}_{x^m} \subseteq \bigcup_{y^n \in {\cal T}^c_n(\tau)} {\cal J}_{y^n}.
\end{align*}
\textchange{Thus, by noting that $P_{X^m}(x^m) = |{\cal I}_{x^m}|$ and $P_{Y^n}(y^n) = |{\cal J}_{y^n}|$, we have the inequality.} 

Furthermore, the first term of \eqref{eq:proof-int-bound-1} can be bounded as
\begin{align}
\sum_{x^m \in {\cal D}_m \cap {\cal E}_m} P_{X^m}(x^m)
&= \sum_{x^m \in  {\cal D}_m \cap {\cal E}_m \cap {\cal S}_m(\lambda)} P_{X^m}(x^m) + \sum_{x^m \in  {\cal D}_m \cap {\cal E}_m \cap {\cal S}_m^c(\lambda)} P_{X^m}(x^m) \nonumber \\
&\le \sum_{x^m \in  {\cal D}_m \cap {\cal E}_m \cap {\cal S}_m(\lambda)} P_{X^m}(x^m) + P_{X^m}({\cal S}_m^c(\lambda)) \nonumber \\
&\le \sum_{x^m \in  {\cal D}_m \cap {\cal E}_m \cap {\cal S}_m(\lambda)} 2^{-\lambda} + P_{X^m}({\cal S}_m^c(\lambda)) \nonumber \\
&\le |{\cal D}_m \cap {\cal E}_m| 2^{-\lambda} + P_{X^m}({\cal S}_m^c(\lambda)) \nonumber \\
&\le 2 |{\cal T}_n(\tau)| 2^{-\lambda} + P_{X^m}({\cal S}_m^c(\lambda)) \nonumber \\ 
&\le 2^{-\lambda+\tau+1} + P_{X^m}({\cal S}_m^c(\lambda)), \label{eq:proof-int-bound-2}
\end{align}
where the second last inequality is justified as follows. 
\textchange{
By noting that $x^m \in {\cal D}_m \cap {\cal E}_m$ implies ${\cal I}_{x^m} \cap {\cal J}_{y^n} \neq \emptyset$ 
and ${\cal I}_{x^m} \not\subseteq {\cal J}_{y^n}$ for some $y^n \in {\cal T}_n(\tau)$, we have
\begin{align*}
|{\cal D}_m \cap {\cal E}_m | \le \sum_{y^n \in {\cal T}_n(\tau)} \sum_{x^m \in {\cal X}^m} \bol{1}\big[ {\cal I}_{x^m} \cap {\cal J}_{y^n} \neq \emptyset,  {\cal I}_{x^m} \not\subseteq {\cal J}_{y^n} \big].
\end{align*}
For each fixed $y^n \in {\cal T}_n(\tau)$, if there are more than two $x^m$'s satisfying ${\cal I}_{x^m} \cap {\cal J}_{y^n} \neq \emptyset$,
then all but the top and bottom ones must satisfy ${\cal I}_{x^m} \subseteq {\cal J}_{y^n}$; in other words, 
there are at most two $x^m$'s satisfying both the conditions in the indicator function. Thus, we have
\begin{align*}
\sum_{y^n \in {\cal T}_n(\tau)} \sum_{x^m \in {\cal X}^m} \bol{1}\big[ {\cal I}_{x^m} \cap {\cal J}_{y^n} \neq \emptyset,  {\cal I}_{x^m} \not\subseteq {\cal J}_{y^n} \big] \le 2 |{\cal T}_n(\tau)|.
\end{align*}
}

Finally, by combining \eqref{eq:proof-int-bound-1} and \eqref{eq:proof-int-bound-2}, we have the claimed bound. 
\end{proof}

Now, we argue the validity of the interval algorithm. Clearly, if the coin process is deterministic, the random 
number generation is not possible. By using Theorem \ref{theorem:performance-interval},
we can prove that the interval algorithm exactly generate a target distribution
as long as the coin process has ``diverging randomness". 
\begin{corollary} \label{corollary-validity}
If the coin process $\bm{X}=\{ X^m\}_{m=1}^\infty$ satisfies 
\begin{align} \label{eq:non-trivial-randomness}
\lim_{m\to\infty} P_{X^m}({\cal S}_m^c(\lambda)) = 0
\end{align}
for every $\lambda > 0$, where ${\cal S}_m(\lambda)$ is defined as in \eqref{eq:typical-X}, then the interval algorithm is valid, i.e., 
\begin{align*}
\lim_{m\to\infty} \Pr( \phi_{\mathtt{int}}(X^m) = y^n) = P_{Y^n}(y^n)
\end{align*}
for every $y^n \in {\cal Y}^n$. 
\end{corollary}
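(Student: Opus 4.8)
The plan is to combine the overflow bound of Theorem \ref{theorem:performance-interval} with a ``squeeze'' argument over the finite alphabet ${\cal Y}^n$. Throughout, the target length $n$ is fixed, so ${\cal Y}^n$ is a finite set, and I will abbreviate $p_m(y^n) := \Pr(\phi_{\mathtt{int}}(X^m) = y^n)$.

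First I would show that the algorithm terminates in the limit, i.e.\ $\Pr(T \le m) \to 1$. Since ${\cal Y}^n$ is finite, I can choose $\tau$ large enough, namely $\tau = \max\{ \log(1/P_{Y^n}(y^n)) : P_{Y^n}(y^n) > 0 \}$, which is finite, so that $P_{Y^n}({\cal T}_n^c(\tau)) = 0$ with ${\cal T}_n(\tau)$ as in \eqref{eq:typical-Y} (the sequences with $P_{Y^n}(y^n)=0$ lie in ${\cal T}_n^c(\tau)$ but contribute zero mass). Substituting into Theorem \ref{theorem:performance-interval} gives $\Pr(T > m) \le P_{X^m}({\cal S}_m^c(\lambda)) + 2^{-\lambda + \tau + 1}$ for every $\lambda > 0$. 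Letting $m \to \infty$ and invoking the diverging-randomness hypothesis \eqref{eq:non-trivial-randomness} yields $\limsup_{m\to\infty} \Pr(T > m) \le 2^{-\lambda+\tau+1}$; since $\lambda$ is arbitrary and $\tau$ is fixed, taking $\lambda \to \infty$ forces $\Pr(T > m) \to 0$, hence $\Pr(T \le m) \to 1$.

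Next I would record two facts about $p_m(y^n)$. First, the map $m \mapsto p_m(y^n)$ is nondecreasing (once the algorithm has terminated with output $y^n$ at some step $i \le m$, it remains terminated) and bounded by $1$, so the limit $a(y^n) := \lim_{m\to\infty} p_m(y^n)$ exists. Second, writing $p_m(y^n)$ as the sum of $P_{X^{|s|}}(s)$ over the leaves $s \in {\cal L}_{\phi_{\mathtt{int}}}$ with $\phi_{\mathtt{int}}(s) = y^n$ and $|s| \le m$, and using $P_{X^{|s|}}(s) = |{\cal I}_s|$, the geometry of the interval algorithm gives the termwise upper bound $a(y^n) \le P_{Y^n}(y^n)$: whenever the algorithm terminates with complete output $y^n$ it has found ${\cal I}_s \subseteq {\cal J}_{y^n}$, the leaf intervals $\{ {\cal I}_s \}$ are pairwise disjoint since the leaves form a prefix-free set, and $|{\cal J}_{y^n}| = P_{Y^n}(y^n)$, so disjoint intervals nested inside ${\cal J}_{y^n}$ have total length at most $P_{Y^n}(y^n)$.

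Finally I would close the argument by summing over the finite set ${\cal Y}^n$. Because $\sum_{y^n} p_m(y^n) = \Pr(T \le m)$ and the sum has finitely many terms, limit and sum interchange, giving $\sum_{y^n} a(y^n) = \lim_{m\to\infty}\Pr(T\le m) = 1 = \sum_{y^n} P_{Y^n}(y^n)$. Together with the termwise inequality $a(y^n) \le P_{Y^n}(y^n)$, equality of the two totals forces $a(y^n) = P_{Y^n}(y^n)$ for every $y^n$, which is precisely the validity condition \eqref{eq:validity} for $\phi_{\mathtt{int}}$. I expect the main obstacle to be the careful verification in the second step that the leaf intervals assigned to a fixed $y^n$ are disjoint and contained in ${\cal J}_{y^n}$, in particular handling the trailing symbols that the sequential algorithm emits without reading further coin symbols, as in Example \ref{example:behavior}; the termination step and the concluding squeeze are then routine.
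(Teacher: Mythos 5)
Your proposal is correct and follows essentially the same route as the paper's proof: the same choice of $\tau$ as the maximum of $\log(1/P_{Y^n}(y^n))$ over the support to kill the $P_{Y^n}({\cal T}_n^c(\tau))$ term in Theorem \ref{theorem:performance-interval}, the same monotonicity-plus-interval-containment argument for $\lim_m \Pr(\phi_{\mathtt{int}}(X^m)=y^n) \le P_{Y^n}(y^n)$, and the same concluding squeeze via total mass summing to one over the finite set ${\cal Y}^n$. Your elaboration of the disjointness of the leaf intervals nested in ${\cal J}_{y^n}$ just spells out what the paper states in one line.
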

\begin{proof}
Upon observing $x^m \in {\cal X}^m$, the interval algorithm terminates with output $y^n \in {\cal Y}^n$ if and only if
${\cal I}_{x^m} \subseteq {\cal J}_{y^n}$. Thus, we have
\begin{align} \label{eq:validity-inequality}
\Pr( \phi_{\mathtt{int}}(X^m) = y^n) \le P_{Y^n}(y^n).
\end{align}
Furthermore, since the lefthand side of \eqref{eq:validity-inequality} is non-decreasing in $m$, it has a limit.
To prove that the limit coincides with the righthand side, we apply Theorem  \ref{theorem:performance-interval} with
\begin{align*}
\tau = \max_{y^n \in \mathtt{supp}(P_{Y^n})} \log \frac{1}{P_{Y^n}(y^n)},
\end{align*}
where $\mathtt{supp}(P_{Y^n})$ is the support of distribution $P_{Y^n}$.
Then, we have
\begin{align*}
\Pr(\phi_{\mathtt{int}}(X^m) \notin {\cal Y}^n) &= \Pr(T > m) \\
&\le P_{X^m}({\cal S}_m^c(\lambda)) + 2^{-\lambda + \tau + 1}
\end{align*}
for any $\lambda$. Since \eqref{eq:non-trivial-randomness} holds for any $\lambda$ by assumption,
by taking the limit $m\to \infty$ and $\lambda \to \infty$ with the diagonalization argument (cf.~\cite{han:book}), 
we have
\begin{align*}
\lim_{m\to\infty} \Pr(\phi_{\mathtt{int}}(X^m) \in {\cal Y}^n) = 1,
\end{align*}
which together with \eqref{eq:validity-inequality} implies the claim of the theorem.\footnote{\textchange{Note that
$a \le A$, $b \le B$, $A+B=1$, and $a+b=1$ imply $1-b = a \le A = 1-B$, i.e., $B \le b$.}} 
\end{proof}

In fact, using Theorem \ref{theorem:converse},
we can also prove that the same condition as Corollary \ref{corollary-validity} is necessary for exact random number generation by any algorithms.
\begin{corollary}
If the coin process $\bm{X}=\{ X^m\}_{m=1}^\infty$ does not satisfy \eqref{eq:non-trivial-randomness} for some $\lambda>0$,
then there exists a target distribution $P_{Y^n}$ with sufficiently large $n$ such that 
the validity \eqref{eq:validity} does not hold for any random number generation algorithms.
\end{corollary}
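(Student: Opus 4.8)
The plan is to argue by contradiction, pitting the converse bound of Theorem~\ref{theorem:converse} against an explicitly constructed ``hard'' target distribution. First I would record the elementary fact that validity is incompatible with a non-vanishing overflow probability: if \eqref{eq:validity} holds for every $y^n \in {\cal Y}^n$, then, since ${\cal Y}^n$ is finite, I may interchange the (finitely many) limits with the sum to get
\begin{align*}
\lim_{m\to\infty}\Pr(T\le m) = \lim_{m\to\infty}\sum_{y^n\in{\cal Y}^n}\Pr(\phi(X^m)=y^n) = \sum_{y^n\in{\cal Y}^n}P_{Y^n}(y^n)=1,
\end{align*}
so that $\lim_{m\to\infty}\Pr(T>m)=0$. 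Hence it suffices to exhibit a target $P_{Y^n}$ for which every algorithm claiming validity would be forced to satisfy $\limsup_{m\to\infty}\Pr(T>m)>0$.

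Next I would quantify the failure of \eqref{eq:non-trivial-randomness}. By assumption there is a $\lambda>0$ for which $P_{X^m}({\cal S}_m^c(\lambda))$ does not tend to $0$; set $c := \limsup_{m\to\infty} P_{X^m}({\cal S}_m^c(\lambda)) > 0$, and fix this $\lambda$ and $c$ throughout. Then I would choose the target: assuming the alphabet is nontrivial ($N\ge 2$), take $P_{Y^n}$ uniform on ${\cal Y}^n$, so that $\log\frac{1}{P_{Y^n}(y^n)} = n\log N$ for every $y^n$. With the choice $\tau := n\log N - 1$ one has ${\cal T}_n(\tau)=\emptyset$, whence $P_{Y^n}({\cal T}_n(\tau))=0$ and $2^{-\tau+\lambda}=2^{1+\lambda}N^{-n}$; I then pick $n$ large enough that $2^{1+\lambda}N^{-n}<c/2$, which fixes the target.

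Now suppose, for contradiction, that some algorithm satisfies \eqref{eq:validity} for this $P_{Y^n}$. Then Theorem~\ref{theorem:converse}, in the form \eqref{eq:single-shot-converse-2} with the above $\lambda$ and $\tau$, applies and yields, for every $m$,
\begin{align*}
\Pr(T>m) \ge P_{X^m}({\cal S}_m^c(\lambda)) - 2^{1+\lambda}N^{-n} \ge P_{X^m}({\cal S}_m^c(\lambda)) - \frac{c}{2}.
\end{align*}
Taking $\limsup_{m\to\infty}$ gives $\limsup_{m\to\infty}\Pr(T>m)\ge c-\frac{c}{2} = \frac{c}{2} >0$, which contradicts the conclusion of the first paragraph. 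Hence no algorithm can be valid for this target, as claimed.

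The point to get right — more a matter of logical bookkeeping than a genuine obstacle — is that Theorem~\ref{theorem:converse} is proved \emph{under} the validity hypothesis (its first identity invokes \eqref{eq:validity}), so it must be deployed \emph{inside} the proof by contradiction: I assume a valid algorithm exists, apply the converse to it, and reach $\limsup_m\Pr(T>m)>0$, rather than applying it to an arbitrary algorithm. I would also confirm that $N\ge 2$ is the correct nontriviality hypothesis (for $N=1$ the target is deterministic and trivially generated with no coin tosses) and that the chosen target's randomness genuinely diverges, which is immediate here since $n\log N\to\infty$ makes $2^{-\tau+\lambda}$ as small as desired.
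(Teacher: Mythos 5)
Your proof is correct and follows essentially the same route as the paper's: apply the converse bound \eqref{eq:single-shot-converse-2} with a target of large min-entropy (so that ${\cal T}_n(\tau)=\emptyset$) and $\tau$ chosen so that $2^{-\tau+\lambda}$ is dominated by the non-vanishing mass of ${\cal S}_m^c(\lambda)$, forcing a non-vanishing overflow probability incompatible with validity. Your use of $\limsup$ and the explicit contradiction framing are slightly more careful renditions of the same argument (the paper instead asserts $P_{X^m}({\cal S}_m^c(\lambda))\ge\delta$ for all large $m$, which holds since this quantity is non-increasing in $m$), but there is no substantive difference.
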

\begin{proof}
Suppose that \eqref{eq:non-trivial-randomness} does not hold for some $\lambda>0$, i.e., there exists $\delta > 0$ such that 
\begin{align*}
P_{X^m}({\cal S}_m^c(\lambda)) \ge \delta
\end{align*}
for every sufficiently large $m$. Let $\tau = \lambda- \log(\delta/2)$. Then, by \eqref{eq:single-shot-converse-2} of Theorem \ref{theorem:converse}, we have
\begin{align*}
\Pr( T > m) \ge \frac{\delta}{2} - P_{Y^n}({\cal T}_n(\tau)).
\end{align*}
This bound implies that, for any target distribution with min-entropy $H_{\min}(Y^n) > \tau$, 
\begin{align*}
\Pr( T > m) \ge \frac{\delta}{2}
\end{align*}
for every sufficiently large $m$.
Thus, the validity \eqref{eq:validity} cannot be satisfied for some $y^n \in {\cal Y}^n$.
\end{proof}

For instance, any absorbing Markov chain does not satisfy the sufficient condition of Corollary \ref{corollary-validity}; note that
absorbing Markov chains have $0$ spectral inf-entropy, i.e., $\underline{H}(\bm{X})=0$.
A further relaxed sufficient condition is that $\underline{H}(\bm{X}) > 0$; however, this relaxed condition is not necessary
in general as the following example illustrates.
\begin{example}[Harmonic Bernoulli Coin]
Let us consider independent but non-stationary Bernoulli trials $\bm{X}= \{ X^m \}_{m=1}^\infty$
such that $P_{X_i}(1) = 2^{-1/i}$. Then, since \textchange{the min-entropy (see the last paragraph of Section \ref{section:introduction} for the definition) of $X^m$ is bounded from below as}
\begin{align*}
H_{\min}(X^m) &= \sum_{i=1}^m H_{\min}(X_i) \\
&= \sum_{i=1}^m \frac{1}{i} \\
&\ge \ln (m+1),
\end{align*}
\eqref{eq:non-trivial-randomness} is satisfied for any $\lambda >0$. Thus, this coin process can be used for
the interval algorithm. However, we can verify that $\underline{H}(\bm{X})=0$ as follows. Note that
\begin{align*}
\frac{1}{m} H(X^m) &= \sum_{i=1}^m \frac{1}{m} H(X_i) \\
&\le h\bigg( \sum_{i=1}^m \frac{1}{m} P_{X_i}(1) \bigg)
\end{align*}
by concavity of the entropy. Furthermore, since $t \mapsto 2^{-t}$ is convex, we have
\begin{align*}
\sum_{i=1}^m \frac{1}{m} P_{X_i}(1) 
&= \sum_{i=1}^m \frac{1}{m} 2^{- H_{\min}(X_i)} \\
&\ge 2^{- \sum_{i=1}^m \frac{1}{m} H_{\min}(X_i)} \\
&\ge 2^{- \frac{\ln m + 1}{m}} \\
&\ge \frac{1}{2},
\end{align*}
which implies 
\begin{align*}
\frac{1}{m} H(X^m) \le h\big( 2^{- \frac{\ln m + 1}{m}} \big).
\end{align*}
Thus, by \cite[Theorem 1.7.2]{han:book}, we have
\begin{align*}
\underline{H}(\bm{X}) \le \lim_{m\to\infty} \frac{1}{m} H(X^m) = 0.
\end{align*}

On the other hand, if the probability distribution of each trial is $P_{X_i}(1)=2^{-1/i^2}$, then, for any $\delta > 0$, we have
\begin{align*}
\lim_{m\to\infty} P_{X^m}( {\cal S}_m^c(\pi^2/6 + \delta)) \ge \prod_{i=1}^\infty P_{X_i}(1) = 2^{- \pi^2/6}. 
\end{align*}
Thus, this coin process cannot be used for any random number generation algorithms.
\end{example}

\section{Asymptotic Analysis} \label{sec:asymptotic}

\subsection{General Results}

In this section, we shall examine the asymptotic optimality of the interval algorithm. 
Recall the notations of information measures described in \eqref{eq:notation-sup-entropy}, \eqref{eq:notation-inf-entropy},
and \eqref{eq:notation-entropy}. 
We start with the criterion of the overflow probability of the stopping time. 
\begin{definition}
For a given random number generation algorithm converting $\bm{X}$ to $\bm{Y}$, a rate $R \ge 0$ is defined to be achievable if the stopping time $T_n$ satisfies 
\begin{align*}
\lim_{n\to\infty} \Pr( T_n > nR) = 0.
\end{align*}
Then, let $R^\star_{\mathtt{int}}(\bm{X},\bm{Y})$ and $R^\star(\bm{X},\bm{Y})$ be the infimum rates that are achievable by
the interval algorithm and by any algorithm (not necessarily the interval algorithm), respectively. 
\end{definition}

\begin{theorem} \label{theorem:stopping-interval-asymptotic}
For given coin process $\bm{X}$ with $\underline{H}(\bm{X})>0$ and target process $\bm{Y}$, the infimum achievable rate of the interval algorithm satisfies 
\begin{align} \label{eq:theorem-stopping-interval}
R^\star_{\mathtt{int}}(\bm{X},\bm{Y}) \le  \frac{\overline{H}(\bm{Y})}{\underline{H}(\bm{X})}.
\end{align}
On the other hand, the infimum achievable rate of any algorithm satisfies 
\begin{align} \label{eq:theorem-stopping-any}
R^\star(\bm{X},\bm{Y}) \ge \max\bigg[ \frac{\overline{H}(\bm{Y})}{\overline{H}(\bm{X})}, \frac{\underline{H}(\bm{Y})}{\underline{H}(\bm{X})} \bigg]. 
\end{align}
\end{theorem}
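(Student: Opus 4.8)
The plan is to deduce both inequalities from the single-shot bounds already in hand, by setting the coin length to $m=\lceil nR\rceil$ and letting the thresholds $\tau$ and $\lambda$ grow linearly in $n$. In each case the argument reduces to a bookkeeping of the error terms appearing in Theorem \ref{theorem:performance-interval} (for achievability) and Theorem \ref{theorem:converse} (for the converse): two of the three terms are driven to zero directly by the definitions of the spectral sup-/inf-entropies, while the exponential term $2^{-\tau+\lambda}$ is governed by the relation between $R$ and the relevant ratio of entropy rates.

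For the achievability bound \eqref{eq:theorem-stopping-interval}, I would fix any $R>\overline{H}(\bm{Y})/\underline{H}(\bm{X})$ and apply Theorem \ref{theorem:performance-interval} with $m=\lceil nR\rceil$, $\tau=n(\overline{H}(\bm{Y})+\epsilon)$, and $\lambda=m(\underline{H}(\bm{X})-\epsilon')$. Then $P_{Y^n}({\cal T}_n^c(\tau))\to 0$ and $P_{X^m}({\cal S}_m^c(\lambda))\to 0$ follow at once from the definitions \eqref{eq:notation-sup-entropy} and \eqref{eq:notation-inf-entropy}, while $2^{-\lambda+\tau+1}\to 0$ holds precisely when $R>(\overline{H}(\bm{Y})+\epsilon)/(\underline{H}(\bm{X})-\epsilon')$. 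Because $\underline{H}(\bm{X})>0$, this last inequality is securable by taking $\epsilon,\epsilon'$ small, and letting them vanish shows every such $R$ is achievable by the interval algorithm (noting that $\underline{H}(\bm{X})>0$ also guarantees validity of the interval algorithm through Corollary \ref{corollary-validity}).

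For the converse \eqref{eq:theorem-stopping-any}, I would show that any $R$ strictly below either ratio fails to be achievable, in the sense that $\limsup_n \Pr(T_n>nR)>0$; since each ratio then lower-bounds $R^\star(\bm{X},\bm{Y})$, so does their maximum. To rule out $R<\overline{H}(\bm{Y})/\overline{H}(\bm{X})$, I would invoke \eqref{eq:single-shot-converse-1} with $\tau=n(\overline{H}(\bm{Y})-\epsilon)$ and $\lambda=m(\overline{H}(\bm{X})+\epsilon')$: here $P_{X^m}({\cal S}_m(\lambda))\to 0$ and $2^{-\tau+\lambda}\to 0$ (the latter exactly because $R<(\overline{H}(\bm{Y})-\epsilon)/(\overline{H}(\bm{X})+\epsilon')$ for small $\epsilon,\epsilon'$), so the bound retains the positive $\limsup_n P_{Y^n}({\cal T}_n^c(\tau))$ guaranteed by \eqref{eq:notation-sup-entropy}. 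Symmetrically, to rule out $R<\underline{H}(\bm{Y})/\underline{H}(\bm{X})$, I would use the dual form \eqref{eq:single-shot-converse-2} with $\tau=n(\underline{H}(\bm{Y})-\epsilon)$ and $\lambda=m(\underline{H}(\bm{X})+\epsilon')$, so that $P_{Y^n}({\cal T}_n(\tau))\to 0$ and $2^{-\tau+\lambda}\to 0$ while the surviving term $P_{X^m}({\cal S}_m^c(\lambda))$ has positive $\limsup$.

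The routine part is checking that each probability term has its claimed limiting behavior straight from the definitions \eqref{eq:notation-sup-entropy} and \eqref{eq:notation-inf-entropy}, modulo replacing strict by non-strict inequalities with an infinitesimal slack. The main obstacle, and the step deserving the most care, is the coordinated choice of $(\tau,\lambda,m)$: the rate condition enters exactly through the sign of the exponent $-\tau+\lambda$, and one must verify that a single choice simultaneously kills two terms and keeps the third from vanishing. A secondary subtlety is that the spectral entropies are defined through $\limsup/\liminf$ rather than genuine limits, so the converse must be phrased with $\limsup_n \Pr(T_n>nR)$ and must use that subtracting an $o(1)$ sequence leaves the $\limsup$ unchanged; the hypothesis $\underline{H}(\bm{X})>0$ is precisely what keeps the denominators in the achievability bound bounded away from zero.
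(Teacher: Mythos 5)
Your proposal is correct and follows essentially the same route as the paper: both directions are obtained by plugging linearly scaled thresholds $\tau,\lambda$ and $m\approx nR$ into Theorem \ref{theorem:performance-interval} and Theorem \ref{theorem:converse}, letting the definitions \eqref{eq:notation-sup-entropy}--\eqref{eq:notation-inf-entropy} kill the probability terms, and using the sign of $-\tau+\lambda$ to control the exponential term, with the slack parameters chosen in the same coordinated way (your $\epsilon,\epsilon'$ play the role of the paper's $\delta_1,\delta_2$). Your phrasing of the converse via $\limsup_n\Pr(T_n>nR)>0$ is in fact the careful version of what the paper states.
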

\begin{proof}
We first prove \eqref{eq:theorem-stopping-interval}. Fix arbitrary $\delta_1,\delta_2>0$. By applying Theorem \ref{theorem:performance-interval} with
\begin{align*}
m_n &= n \bigg( \frac{\overline{H}(\bm{Y})}{\underline{H}(\bm{X})} + \delta_2 \bigg), \\
R &= \frac{\overline{H}(\bm{Y})}{\underline{H}(\bm{X})} + \delta_2, \\
\lambda &= m_n ( \underline{H}(\bm{X}) - \delta_1), \\
\tau &= n( \overline{H}(\bm{Y}) + \delta_1),
\end{align*}
we can bound the overflow probability of the stopping time for the interval algorithm as
\begin{align*}
\Pr( T_n > n R) &\le \Pr\bigg( \frac{1}{m_n} \log \frac{1}{P_{X^{m_n}}(X^{m_n})} < \underline{H}(\bm{X}) - \delta_1 \bigg)
+ \Pr\bigg( \frac{1}{n} \log \frac{1}{P_{Y^n}(Y^n)} > \overline{H}(\bm{Y}) + \delta_1 \bigg) \\
&~~~ + \exp\bigg[ - n \bigg\{ \delta_2( \underline{H}(\bm{X}) - \delta_1) - \delta_1\bigg( \frac{\overline{H}(\bm{Y})}{\underline{H}(\bm{X})} + 1 \bigg) \bigg\} + 1 \bigg].
\end{align*}
Thus, if we take $\delta_1$ sufficiently small compared to $\delta_2$, we have
\begin{align*}
\lim_{n\to\infty} \Pr( T_n > n R) = 0,
\end{align*}
which implies that $R=\overline{H}(\bm{Y})/\underline{H}(\bm{X})+\delta_2$ is achievable. Since $\delta_2>0$ is arbitrary, we have \eqref{eq:theorem-stopping-interval}.

Next, we prove the first bound of \eqref{eq:theorem-stopping-any}. 
Fix arbitrary $\delta_1,\delta_2>0$. By applying \eqref{eq:single-shot-converse-1} of Theorem \ref{theorem:converse} with
\begin{align*}
m_n &= n \bigg( \frac{\overline{H}(\bm{Y})}{\overline{H}(\bm{X})} - \delta_2 \bigg), \\
R &= \frac{\overline{H}(\bm{Y})}{\overline{H}(\bm{X})} - \delta_2, \\
\lambda &= m_n ( \overline{H}(\bm{X}) + \delta_1), \\
\tau &= n( \overline{H}(\bm{Y}) - \delta_1),
\end{align*}
for any random number generation algorithms, we have
\begin{align*}
\Pr( T_n > n R) &\ge \Pr\bigg( \frac{1}{n} \log \frac{1}{P_{Y^n}(Y^n)} > \overline{H}(\bm{Y}) - \delta_1 \bigg)
 - \Pr\bigg( \frac{1}{m_n} \log \frac{1}{P_{X^{m_n}}(X^{m_n})} \ge \overline{H}(\bm{X}) + \delta_1 \bigg) \\
 &~~~- \exp\bigg[ - n \bigg\{ \delta_2 ( \overline{H}(\bm{X})+\delta_1)  - \delta_1 \bigg( \frac{\overline{H}(\bm{Y})}{\overline{H}(\bm{X})}+1 \bigg) \bigg\} \bigg].
\end{align*} 
Thus, if we take $\delta_1$ sufficiently small compared to $\delta_2$, the definition of $\overline{H}(\bm{Y})$ leads to
\begin{align*}
\liminf_{n\to\infty} \Pr(T_n > nR) > 0,
\end{align*}
which implies that $R = \overline{H}(\bm{Y})/ \overline{H}(\bm{X}) - \delta_2$ is not achievable. Since $\delta_2$ is arbitrary, we have
the first bound of \eqref{eq:theorem-stopping-any}. We can prove the second bound of \eqref{eq:theorem-stopping-any}
in a similar manner by using \eqref{eq:single-shot-converse-2} of Theorem \ref{theorem:converse}.
\end{proof}

When either the coin or the target process has one point spectrum, we immediately obtain the following corollary from Theorem \ref{theorem:stopping-interval-asymptotic}.

\begin{corollary} \label{corollary:asymptotic-stopping-overflow}
When the spectral sup-entropy $\overline{H}(\bm{X})$ and inf-entropy $\underline{H}(\bm{X})$ of coin process 
coincide with its  entropy rate $H(\bm{X})$,\footnote{When $\overline{H}(\bm{X})=\underline{H}(\bm{X})$, called the {\em one-point} spectrum, the limit in \eqref{eq:notation-entropy} exists,
and we have $\overline{H}(\bm{X})=\underline{H}(\bm{X}) = H(\bm{X})$ (cf.~\cite[Theorem 1.7.2]{han:book}).} we have
\begin{align*}
R^\star_{\mathtt{int}}(\bm{X},\bm{Y}) = R^\star(\bm{X},\bm{Y}) = \frac{\overline{H}(\bm{Y})}{H(\bm{X})}.
\end{align*}
On the other hand, when spectral sup-entropy $\overline{H}(\bm{Y})$ and inf-entropy $\underline{H}(\bm{Y})$ of
the target process coincide with its entropy rate $H(\bm{Y})$, we have
\begin{align*}
R^\star_{\mathtt{int}}(\bm{X},\bm{Y}) = R^\star(\bm{X},\bm{Y}) = \frac{H(\bm{Y})}{\underline{H}(\bm{X})}.
\end{align*}
\end{corollary}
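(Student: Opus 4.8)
The plan is to obtain the corollary by sandwiching $R^\star(\bm{X},\bm{Y})$ and $R^\star_{\mathtt{int}}(\bm{X},\bm{Y})$ between matching upper and lower bounds already supplied by Theorem \ref{theorem:stopping-interval-asymptotic}, once the one-point spectrum hypothesis forces the two relevant quantities in that theorem to coincide. The only ingredient needed beyond that theorem is the trivial ordering $R^\star(\bm{X},\bm{Y}) \le R^\star_{\mathtt{int}}(\bm{X},\bm{Y})$, which holds because the interval algorithm is one particular admissible algorithm, so the infimum achievable rate over all algorithms cannot exceed that achieved by the interval algorithm alone.

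For the first assertion I would assume the coin has one-point spectrum, so that $\overline{H}(\bm{X}) = \underline{H}(\bm{X}) = H(\bm{X})$; this also secures the standing hypothesis $\underline{H}(\bm{X})>0$ required by the achievability part of Theorem \ref{theorem:stopping-interval-asymptotic}, since $H(\bm{X})>0$ is implicit in the stated ratio. The achievability bound \eqref{eq:theorem-stopping-interval} then gives $R^\star_{\mathtt{int}}(\bm{X},\bm{Y}) \le \overline{H}(\bm{Y})/\underline{H}(\bm{X}) = \overline{H}(\bm{Y})/H(\bm{X})$. For the converse I would retain only the first term inside the maximum in \eqref{eq:theorem-stopping-any}, namely $\overline{H}(\bm{Y})/\overline{H}(\bm{X})$, which by one-point spectrum equals $\overline{H}(\bm{Y})/H(\bm{X})$, so that $R^\star(\bm{X},\bm{Y}) \ge \overline{H}(\bm{Y})/H(\bm{X})$. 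Chaining these with the trivial ordering yields
\begin{align*}
\frac{\overline{H}(\bm{Y})}{H(\bm{X})} \le R^\star(\bm{X},\bm{Y}) \le R^\star_{\mathtt{int}}(\bm{X},\bm{Y}) \le \frac{\overline{H}(\bm{Y})}{H(\bm{X})},
\end{align*}
forcing equality throughout.

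For the second assertion I would argue symmetrically but select the other term of the maximum. Assuming the target has one-point spectrum gives $\overline{H}(\bm{Y}) = \underline{H}(\bm{Y}) = H(\bm{Y})$, and with $\underline{H}(\bm{X})>0$ the achievability bound reads $R^\star_{\mathtt{int}}(\bm{X},\bm{Y}) \le \overline{H}(\bm{Y})/\underline{H}(\bm{X}) = H(\bm{Y})/\underline{H}(\bm{X})$. On the converse side I would keep the second term $\underline{H}(\bm{Y})/\underline{H}(\bm{X}) = H(\bm{Y})/\underline{H}(\bm{X})$ of the maximum in \eqref{eq:theorem-stopping-any}, obtaining $R^\star(\bm{X},\bm{Y}) \ge H(\bm{Y})/\underline{H}(\bm{X})$. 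The same sandwich then collapses all three quantities to $H(\bm{Y})/\underline{H}(\bm{X})$.

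I do not anticipate a genuine obstacle: the substance is entirely contained in Theorem \ref{theorem:stopping-interval-asymptotic}, and the corollary amounts to specializing its two bounds so that they meet. The only points requiring care are the bookkeeping of which term of the maximum in \eqref{eq:theorem-stopping-any} to keep in each case, and confirming that the positivity hypothesis $\underline{H}(\bm{X})>0$ is in force, which it is in both cases (because $\underline{H}(\bm{X})=H(\bm{X})$ is positive whenever the first ratio is finite, and by hypothesis in the second case).
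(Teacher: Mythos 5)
Your proposal is correct and follows exactly the route the paper intends: the corollary is stated as an immediate consequence of Theorem \ref{theorem:stopping-interval-asymptotic}, obtained by specializing the achievability bound \eqref{eq:theorem-stopping-interval} and the appropriate term of the converse maximum in \eqref{eq:theorem-stopping-any} under the one-point spectrum hypothesis, then closing the sandwich with the trivial ordering $R^\star(\bm{X},\bm{Y}) \le R^\star_{\mathtt{int}}(\bm{X},\bm{Y})$. Your bookkeeping of which term of the maximum to retain in each case is exactly right, and the remark that $\underline{H}(\bm{X})>0$ is implicit in the finiteness of the stated ratios is a fair reading of the corollary's hypotheses.
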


Next, we investigate the average stopping time $\mathbb{E}[T_n]$.
\begin{definition}
For a given random number generation algorithm converting $\bm{X}$ to $\bm{Y}$, a rate $L \ge 0$
is defined to be average achievable if the average stopping time $\mathbb{E}[T_n]$ satisfies
\begin{align*}
\limsup_{n\to\infty} \frac{\mathbb{E}[T_n]}{n} \le L.
\end{align*}
Then, let $L_{\mathtt{int}}^\star(\bm{X},\bm{Y})$ and $L^\star(\bm{X},\bm{Y})$ be the infimum rates that are average achievable 
by the interval algorithm and by any algorithm (not necessarily the interval algorithm), respectively. 
\end{definition}

In the following argument, as a technical condition, we assume that the upper and lower tails of the information spectrum of the coin process vanish
sufficiently rapidly in the following sense:  
for any $\delta>0$, there exist constants $K$ and $m_0=m_0(\delta,K)$ such that
\begin{align} \label{eq:upper-tail}
\Pr\bigg( \frac{1}{m}\log \frac{1}{P_{X^m}(X^m)} \ge \overline{H}(\bm{X})+\delta \bigg) \le \frac{K}{m^2}
\end{align}
and
\begin{align} \label{eq:lower-tail}
\Pr\bigg( \frac{1}{m}\log \frac{1}{P_{X^m}(X^m)} \le \underline{H}(\bm{X})-\delta \bigg) \le \frac{K}{m^2}
\end{align}
for every $m \ge m_0$. In fact, i.i.d. processes, irreducible Markov processes, or mixture of those processes
satisfy much stronger requirement, i.e., the upper and lower tails vanish exponentially \cite{dembo-zeitouni-book}.

Now, we are ready to present the asymptotic behavior of the average stopping time.
\begin{theorem} \label{theorem:asymptotic-average}
For given coin process $\bm{X}$ satisfying \eqref{eq:lower-tail} and target process $\bm{Y}$, the infimum average achievable rate of the interval algorithm satisfies 
\begin{align} \label{eq:asymptotic-average-interval-algorithm}
L_{\mathtt{int}}^\star(\bm{X}, \bm{Y}) \le \frac{H(\bm{Y})}{\underline{H}(\bm{X})}.
\end{align}
On the other hand, 
for given coin process $\bm{X}$ satisfying \eqref{eq:upper-tail} and target process $\bm{Y}$, 
the infimum average achievable rate of any algorithm satisfies 
\begin{align}  \label{eq:asymptotic-average-any-algorithm}
L^\star(\bm{X},\bm{Y}) \ge \frac{H(\bm{Y})}{\overline{H}(\bm{X})}.
\end{align}
\end{theorem}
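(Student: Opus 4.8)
The plan is to convert both single-shot bounds already established into statements about the average stopping time via the elementary identity
$$\mathbb{E}[T_n] = \sum_{m=0}^\infty \Pr(T_n > m),$$
and then to tune the free parameters $\lambda,\tau$ in those bounds as functions of the summation index $m$ so that the resulting series converges and is dominated by the target-process contribution. Writing $W_n := \log\frac{1}{P_{Y^n}(Y^n)}$, so that $\mathbb{E}[W_n] = H(Y^n)$, the guiding heuristic is that the interval algorithm spends about $\log\frac{1}{P_{Y^n}(y^n)}/\underline{H}(\bm{X})$ coin tosses to emit a target string $y^n$, so averaging should give $\mathbb{E}[T_n]\approx H(Y^n)/\underline{H}(\bm{X})$. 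The two tail hypotheses \eqref{eq:lower-tail} and \eqref{eq:upper-tail} are exactly what makes the coin information-spectrum terms summable at rate $O(1/m^2)$.

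For the achievability bound \eqref{eq:asymptotic-average-interval-algorithm}, I would apply Theorem \ref{theorem:performance-interval} with the $m$-dependent choices
$$\lambda_m = m(\underline{H}(\bm{X}) - \delta), \qquad \tau_m = \lambda_m - 2\log m - 1,$$
so that the slack term becomes $2^{-\lambda_m+\tau_m+1} = 1/m^2$ and, for $m \ge m_0(\delta,K)$, the coin term obeys $P_{X^m}(\mathcal{S}_m^c(\lambda_m)) \le K/m^2$ by \eqref{eq:lower-tail}. Summing over $m$, the two $O(1/m^2)$ series contribute only a constant, the terms $m < m_0$ are bounded trivially by $1$, and the only essential contribution is $\sum_{m\ge m_0} P_{Y^n}(\mathcal{T}_n^c(\tau_m)) = \sum_m \Pr(W_n > \tau_m)$. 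A layer-cake estimate shows this last sum is at most $H(Y^n)/(\underline{H}(\bm{X}) - \delta)$ up to an additive $O(\log n)$ coming from the $2\log m$ correction in $\tau_m$ (here one uses only that $\mathbb{E}[W_n] = H(Y^n) < \infty$, together with Jensen's inequality to control $\mathbb{E}[\log W_n]$). Dividing by $n$, taking $\limsup_n$, and then $\delta \downarrow 0$ yields \eqref{eq:asymptotic-average-interval-algorithm}.

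The converse bound \eqref{eq:asymptotic-average-any-algorithm} is entirely parallel but uses \eqref{eq:single-shot-converse-1} of Theorem \ref{theorem:converse} with
$$\lambda_m = m(\overline{H}(\bm{X}) + \delta), \qquad \tau_m = \lambda_m + 2\log m + 1,$$
so that now \eqref{eq:upper-tail} gives $P_{X^m}(\mathcal{S}_m(\lambda_m)) \le K/m^2$ and the slack is $2^{-\tau_m+\lambda_m} = \tfrac{1}{2}m^{-2}$. Since every summand of $\sum_{m\ge 0}\Pr(T_n > m)$ is nonnegative, discarding the terms $m < m_0$ preserves a lower bound, and the surviving series is bounded below by $\sum_m \Pr(W_n > \tau_m)$ minus an $O(1)$ error; the same layer-cake identity now delivers a lower bound $H(Y^n)/(\overline{H}(\bm{X}) + \delta) - O(\log n)$. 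Dividing by $n$, taking $\limsup_n$ (which turns $H(Y^n)/n$ into $H(\bm{Y})$), and letting $\delta \downarrow 0$ gives \eqref{eq:asymptotic-average-any-algorithm}.

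I expect the main obstacle to be the layer-cake step rather than the probabilistic bounds: because the target information density $W_n$ is only integrable and need not be $O(n)$ pointwise, I cannot truncate the sum at a deterministic index and must instead argue that $\sum_m \Pr(W_n > m a - c_m)$ equals $\mathbb{E}[W_n]/a$ up to lower-order terms, carefully tracking that the logarithmic shifts $c_m = \pm(2\log m + 1)$ perturb the count of active indices by only $O(\log W_n)$, whose expectation is $O(\log n)$ by Jensen and hence negligible after normalization. A secondary point to verify is the boundary behaviour: when $\underline{H}(\bm{X}) = 0$ (resp.\ $\overline{H}(\bm{X}) = 0$) the asserted bound is $+\infty$ and holds vacuously, so one may assume the relevant spectral entropy is strictly positive when choosing $\delta$ small.
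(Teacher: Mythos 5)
Your proposal is correct and follows essentially the same route as the paper: the tail-sum identity $\mathbb{E}[T_n]=\sum_{m\ge 0}\Pr(T_n>m)$ (the paper uses the integral form $\int_0^\infty \Pr(T_n>z)\,dz$), the single-shot bounds of Theorems \ref{theorem:converse} and \ref{theorem:performance-interval} with $m$-dependent parameters, the tail conditions \eqref{eq:lower-tail}/\eqref{eq:upper-tail} to make the coin terms summable, and the identification of the target term with $H(Y^n)$ divided by the appropriate spectral entropy. The only difference is in how the slack term is tamed: the paper sets $\tau=z(\underline{H}(\bm{X})\mp 2\delta)$ so that the gap $|\lambda-\tau|\approx\delta z$ yields a geometrically decaying slack and the target term integrates \emph{exactly} to $H(Y^n)/(\underline{H}(\bm{X})\mp 2\delta)$ via the expectation identity, whereas your additive $2\log m$ shift yields the $1/m^2$ slack at the cost of the extra $O(\log n)$ layer-cake bookkeeping, which you correctly identify and handle.
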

\begin{proof}
We first prove \eqref{eq:asymptotic-average-interval-algorithm}. By using the identity on the expectation (eg.~see \cite[Eq.~(21.9)]{billingsley-book}), we can write
\begin{align} \label{eq:proof-asymptotic-average-interval}
\mathbb{E}[T_n] = \int_0^\infty \Pr( T_n > z) dz.
\end{align}
Fix arbitrary $\delta > 0$. For each $z$, by applying Theorem \ref{theorem:performance-interval} with $\lambda = \lfloor z \rfloor ( \underline{H}(\bm{X}) - \delta)$
and $\tau = z ( \underline{H}(\bm{X})-2\delta)$, we have
\begin{align*}
\Pr( T_n > z) &\le \Pr( T_n > \lfloor z \rfloor ) \\
&\le \Pr\bigg( \frac{1}{\lfloor z \rfloor} \log \frac{1}{P_{X^{\lfloor z \rfloor}}(X^{\lfloor z \rfloor})} < \underline{H}(\bm{X})-\delta \bigg) + \Pr\bigg( \frac{1}{(\underline{H}(\bm{X})-2\delta)} \log \frac{1}{P_{Y^n}(Y^n)} > z \bigg)
 + 2^{-\delta z+1}.
\end{align*}
The integral of the first term is bounded as 
\begin{align*}
\int_0^\infty  \Pr\bigg( \frac{1}{\lfloor z \rfloor} \log \frac{1}{P_{X^{\lfloor z \rfloor}}(X^{\lfloor z \rfloor})} < \underline{H}(\bm{X})-\delta \bigg) dz 
&\le m_0 + \int_{m_0}^\infty \frac{K}{(z-1)^2} dz \\
&= m_0 + \frac{K}{(m_0-1)},
\end{align*}
where the inequality follows from \eqref{eq:lower-tail}; the integral of the second term is 
\begin{align*}
\int_0^\infty  \Pr\bigg( \frac{1}{(\underline{H}(\bm{X})-2\delta)} \log \frac{1}{P_{Y^n}(Y^n)} > z \bigg) dz = \frac{H(Y^n)}{(\underline{H}(\bm{X})-2\delta)},
\end{align*}
where we used the identity on the expectation again; the integral of the third term is given by $\frac{2}{\delta \ln 2}$. 
By substituting these evaluations into \eqref{eq:proof-asymptotic-average-interval}, we obtain
\begin{align*}
\limsup_{n\to\infty} \frac{\mathbb{E}[T_n]}{n} \le \frac{H(\bm{Y})}{(\underline{H}(\bm{X})-2\delta)}.
\end{align*} 
Since $\delta > 0$ is arbitrary, we have \eqref{eq:asymptotic-average-interval-algorithm}.

Next, we prove \eqref{eq:asymptotic-average-any-algorithm}. We start with \eqref{eq:proof-asymptotic-average-interval}. Fix arbitrary $\delta > 0$.
For each $z$, by applying \eqref{eq:single-shot-converse-1} of Theorem \ref{theorem:converse} with $\lambda = \lceil z \rceil (\overline{H}(\bm{X})+\delta)$
and $\tau = z(\overline{H}(\bm{X})+2\delta)$, we have
\begin{align*}
\lefteqn{ \Pr( T_n > z) } \\
&\ge \Pr(T_n > \lceil z \rceil ) \\
&\ge \Pr\bigg( \frac{1}{(\overline{H}(\bm{X})+2\delta)} \log \frac{1}{P_{Y^n}(Y^n)} > z \bigg)
 - \Pr\bigg( \frac{1}{\lceil z \rceil} \log \frac{1}{P_{X^{\lceil z \rceil}}(X^{\lceil z \rceil})} \ge \overline{H}(\bm{X}) + \delta \bigg)
 - 2^{-\delta (z-1) + \overline{H}(\bm{X})}.
\end{align*}
By evaluating the integral of each term in a similar manner as above and by substituting them into \eqref{eq:proof-asymptotic-average-interval}, we obtain
\begin{align*}
\limsup_{n\to\infty} \frac{\mathbb{E}[T_n]}{n} \ge \frac{H(\bm{Y})}{(\overline{H}(\bm{X})+2\delta)}.
\end{align*}
Since $\delta > 0$ is arbitrary, we have \eqref{eq:asymptotic-average-any-algorithm}.
\end{proof}

When the coin process has one point spectrum, we immediately obtain the following corollary from Theorem \ref{theorem:asymptotic-average}. 
\begin{corollary} \label{corollary:average-asymptotic}
When the coin process satisfies \eqref{eq:upper-tail}, \eqref{eq:lower-tail}, and $\overline{H}(\bm{X})$ and  $\underline{H}(\bm{X})$ coincide with $H(\bm{X})$, we have
\begin{align*}
L_{\mathtt{int}}^\star(\bm{X}, \bm{Y}) = L^\star(\bm{X},\bm{Y}) = \frac{H(\bm{Y})}{H(\bm{X})}.
\end{align*}
\end{corollary}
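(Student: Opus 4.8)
The plan is to read this off directly from Theorem \ref{theorem:asymptotic-average} by feeding in the one-point-spectrum hypothesis and then closing the small gap with a trivial comparison between the two figures of merit. The corollary's hypotheses are exactly what is needed to invoke both halves of the theorem: the lower-tail condition \eqref{eq:lower-tail} powers the achievability bound \eqref{eq:asymptotic-average-interval-algorithm} for the interval algorithm, while the upper-tail condition \eqref{eq:upper-tail} powers the converse bound \eqref{eq:asymptotic-average-any-algorithm} valid for every algorithm. So first I would simply quote both inequalities of Theorem \ref{theorem:asymptotic-average}.

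Next I would invoke the one-point-spectrum assumption. By hypothesis $\overline{H}(\bm{X})=\underline{H}(\bm{X})=H(\bm{X})$ (this is precisely the situation described in the footnote of Corollary \ref{corollary:asymptotic-stopping-overflow}, where the limit in \eqref{eq:notation-entropy} exists and all three entropy quantities agree). Substituting $\underline{H}(\bm{X})=H(\bm{X})$ into \eqref{eq:asymptotic-average-interval-algorithm} yields
\begin{align*}
L_{\mathtt{int}}^\star(\bm{X},\bm{Y}) \le \frac{H(\bm{Y})}{H(\bm{X})},
\end{align*}
and substituting $\overline{H}(\bm{X})=H(\bm{X})$ into \eqref{eq:asymptotic-average-any-algorithm} yields
\begin{align*}
L^\star(\bm{X},\bm{Y}) \ge \frac{H(\bm{Y})}{H(\bm{X})}.
\end{align*}

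Finally I would close the loop with the elementary observation that the interval algorithm is itself one admissible algorithm, so the infimum average achievable rate over all algorithms cannot exceed that of the interval algorithm; that is, $L^\star(\bm{X},\bm{Y}) \le L_{\mathtt{int}}^\star(\bm{X},\bm{Y})$. Chaining the three relations gives the sandwich
\begin{align*}
\frac{H(\bm{Y})}{H(\bm{X})} \le L^\star(\bm{X},\bm{Y}) \le L_{\mathtt{int}}^\star(\bm{X},\bm{Y}) \le \frac{H(\bm{Y})}{H(\bm{X})},
\end{align*}
forcing all three middle quantities to equal $H(\bm{Y})/H(\bm{X})$. There is essentially no obstacle here, since the content resides entirely in Theorem \ref{theorem:asymptotic-average}; the only points meriting a sentence of care are that both tail hypotheses are genuinely used (each bound relies on a different one) and that the one-point-spectrum hypothesis is what collapses the a priori distinct denominators $\underline{H}(\bm{X})$ and $\overline{H}(\bm{X})$ to the common value $H(\bm{X})$.
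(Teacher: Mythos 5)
Your argument is correct and is exactly the route the paper intends: it states that the corollary follows immediately from Theorem \ref{theorem:asymptotic-average}, and the only steps needed are the substitution $\overline{H}(\bm{X})=\underline{H}(\bm{X})=H(\bm{X})$ into the two bounds together with the trivial inequality $L^\star(\bm{X},\bm{Y}) \le L_{\mathtt{int}}^\star(\bm{X},\bm{Y})$, all of which you supply. No gaps.
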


It should be noted that the target process $\bm{Y}$ need not to have one point spectrum in Corollary \ref{corollary:average-asymptotic}.

\begin{remark} \label{remark:implication-almost-sure}
When both the coin and target processes are ergodic, it was shown in \cite{UyeKan99} that 
the normalized stopping time of the interval algorithm almost surely converges to the ratio of the entropy rates, i.e.,
\begin{align} \label{eq:almost-sure}
\lim_{n\to\infty} \frac{T_n}{n} = \frac{H(\bm{Y})}{H(\bm{X})}~~~\mbox{a.s.}
\end{align}
This result immediately implies 
\begin{align*}
R_\mathtt{int}^\star(\bm{X},\bm{Y}) = \frac{H(\bm{Y})}{H(\bm{X})}.
\end{align*}
However, in order to derive 
\begin{align*}
L_\mathtt{int}^\star(\bm{X},\bm{Y}) = \frac{H(\bm{Y})}{H(\bm{X})}
\end{align*}
from \eqref{eq:almost-sure}, we need to prove uniform integrability of $T_n/n$ (cf.~\cite[Theorem 16.14]{billingsley-book}),
which is a cumbersome problem thought it may be possible.
\end{remark}


In the next subsections, we shall illustrate the general results above with concrete classes of coin/target processes.

\subsection{Markov Coin/Target Processes} \label{subsec:markov-coin-target}

As a coin process, we consider a Markov chain $\bm{X} = \{X^m \}_{m=1}^\infty$ on ${\cal X}$ induced by a transition matrix
$W(x|x^\prime)$. Suppose that $W$ is irreducible, i.e., for any $x,x^\prime \in {\cal X}$, there exists an integer $k\ge1$ such that $W^k(x|x^\prime)>0$.
When $W$ is irreducible, there exists a unique stationary distribution $\pi$ \cite{Meyer:book}. For the stationary distribution, let 
\begin{align*}
H^W(X) := \sum_{x,x^\prime} \pi(x^\prime) W(x|x^\prime) \log \frac{1}{W(x|x^\prime)},
\end{align*}
which coincides with the entropy rate of the Markov chain when the initial distribution is $\pi$ \cite{cover}.

We use the following bounds from the large deviation theory (cf.~\cite{dembo-zeitouni-book, WatHay:17}).
\begin{lemma} \label{lemma:large-deviation}
Let $\bm{X}=\{X^m\}_{m=1}^\infty$ be a Markov chain  induced by an irreducible transition matrix $W$ 
with arbitrary initial distribution $P_{X_1}$. For $\delta >0$, there exist $\overline{E}(\delta),\underline{E}(\delta)>0$ such that
\begin{align*}
\Pr\bigg( \frac{1}{m} \sum_{i=2}^m \log \frac{1}{W(X_i|X_{i-1})} \ge H^W(X) + \delta \bigg) &\le 2^{-m \overline{E}(\delta)}, \\
\Pr\bigg( \frac{1}{m} \sum_{i=2}^m \log \frac{1}{W(X_i|X_{i-1})} \le H^W(X) - \delta \bigg) &\le 2^{-m \underline{E}(\delta)} \\
\end{align*} 
for every sufficiently large $m$.
\end{lemma}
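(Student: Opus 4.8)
The plan is to prove both inequalities by a direct Chernoff (exponential Markov) argument in which the moment generating function of the additive functional is controlled through the Perron--Frobenius eigenvalue of a tilted matrix. Write $f(x',x) := \log\frac{1}{W(x|x')}$ and $S_m := \sum_{i=2}^m f(X_{i-1},X_i)$, so the quantity in the lemma is $\frac{1}{m}S_m$. Treating distributions as column vectors (so $P_{X_i} = W P_{X_{i-1}}$ under the convention $W_{x,x'} = W(x|x')$), I would first compute, for real $s$,
\begin{align*}
\mathbb{E}\big[ 2^{s S_m} \big] = \sum_{x_1,\ldots,x_m} P_{X_1}(x_1) \prod_{i=2}^m W(x_i|x_{i-1})\, 2^{s f(x_{i-1},x_i)} = \big\langle \bol{1},\, \tilde{W}_s^{\,m-1} P_{X_1} \big\rangle,
\end{align*}
where $\tilde{W}_s(x|x') := W(x|x')\, 2^{s f(x',x)} = W(x|x')^{1-s}$ is the tilted nonnegative matrix. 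For $|s|<1$ the matrix $\tilde{W}_s$ has exactly the same zero pattern as $W$, hence is irreducible.

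By the Perron--Frobenius theorem, $\tilde{W}_s$ has a simple positive eigenvalue $\rho(s)$ equal to its spectral radius, with a strictly positive right eigenvector $v_s$. Using $P_{X_1} \le (\min_x v_s(x))^{-1} v_s$ componentwise gives $\tilde{W}_s^{\,m-1} P_{X_1} \le \rho(s)^{m-1} (\min_x v_s(x))^{-1} v_s$, and therefore $\mathbb{E}[2^{s S_m}] \le C(s)\,\rho(s)^{m-1}$ with $C(s) = (\sum_x v_s(x))/\min_x v_s(x)$, a bound valid for every $m$ and uniform over initial distributions. Setting $\Lambda(s) := \log \rho(s)$, I would then record the three properties I need near $s=0$: first, $\Lambda(0)=0$, since $\tilde{W}_0 = W$ is stochastic with Perron eigenvalue $1$; second, $\Lambda$ is smooth and convex in a neighborhood of $0$, which follows from analytic perturbation theory because the Perron root is simple; and third, $\Lambda'(0) = H^W(X)$, because $\Lambda'(0)$ equals the asymptotic mean $\lim_m \frac{1}{m}\mathbb{E}[S_m]$, and the Cesàro limit of $P_{X_i}$ is the stationary distribution $\pi$, so $\frac{1}{m}\mathbb{E}[S_m]\to \sum_{x',x}\pi(x')W(x|x')\log\frac{1}{W(x|x')} = H^W(X)$.

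With these facts the tail bounds follow quickly. For the upper tail, the Chernoff bound with any $s>0$ gives
\begin{align*}
\Pr\bigg( \frac{1}{m} S_m \ge H^W(X)+\delta \bigg) \le 2^{-m s (H^W(X)+\delta)}\, \mathbb{E}[2^{s S_m}] \le C(s)\, 2^{(m-1)\Lambda(s) - m s (H^W(X)+\delta)}.
\end{align*}
Since $g(s):=\Lambda(s)-s(H^W(X)+\delta)$ satisfies $g(0)=0$ and $g'(0)=\Lambda'(0)-(H^W(X)+\delta)=-\delta<0$, there is a small $s^\star>0$ with $g(s^\star)<0$; fixing it and absorbing $C(s^\star)$ and the factor $(m-1)/m$ into a slightly smaller exponent yields the bound $2^{-m\overline{E}(\delta)}$ with $\overline{E}(\delta) = -g(s^\star)/2 > 0$ for all sufficiently large $m$. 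The lower tail is symmetric: taking $s<0$, the relevant function $\Lambda(s)-s(H^W(X)-\delta)$ has derivative $+\delta$ at $s=0$ and is therefore negative for small $s<0$, giving $\underline{E}(\delta)>0$. The main technical point to nail down is the trio of properties of $\Lambda$ near $s=0$, and in particular the differentiability of $\rho(s)$ together with the evaluation $\Lambda'(0)=H^W(X)$; both rest on simplicity of the Perron root, while everything else is routine. Alternatively, one could invoke the Gärtner--Ellis theorem once $\Lambda$ is shown finite and differentiable, as in \cite{dembo-zeitouni-book}, but the direct argument above already delivers the stated non-asymptotic exponential bound.
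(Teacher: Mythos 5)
The paper states this lemma without proof, citing the large-deviations literature, and your tilted-matrix Chernoff argument is exactly the standard proof those references supply; it is correct and complete in its essentials. Two small points of hygiene: convexity of $\Lambda$ does not come from perturbation theory (it comes from H\"older applied to the log-moment-generating functions), but you never actually need convexity, since $g(0)=0$ and $g'(0)=-\delta<0$ already give $g(s^\star)<0$ for small $s^\star>0$; and the identification $\Lambda'(0)=\lim_m \frac{1}{m}\mathbb{E}[S_m]$ is cleanest via the eigenvalue perturbation formula $\rho'(0)=\bol{1}^{T}\tilde{W}_0'\,\pi$ (left/right Perron vectors of the column-stochastic $W$ being $\bol{1}$ and $\pi$), which evaluates directly to $H^W(X)$ without invoking convergence of derivatives of convex functions.
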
 
 
From Lemma \ref{lemma:large-deviation} and \cite[Theorem 1.7.2]{han:book}, for any initial distribution $P_{X_1}$, we have
\begin{align} \label{eq:irreducible-case-Markov}
\underline{H}(\bm{X}) = H(\bm{X}) = \overline{H}(\bm{X}) = H^W(X).
\end{align} 
Furthermore, the conditions in \eqref{eq:upper-tail} and \eqref{eq:lower-tail} are also satisfied.
 
For the target process, we consider a Markov chain $\bm{Y}=\{ Y^n \}_{n=1}^\infty$ on ${\cal Y}$ induced by a transition matrix $V$.
Suppose that $V$ is not irreducible but there is no transient class (cf.~\cite{Meyer:book}), i.e., the transition matrix can be decomposed as a direct sum form:
\begin{align*}
V = \bigoplus_{\xi=1}^r V_\xi,
\end{align*} 
where $V_\xi$ is the irreducible transition matrix on irreducible class ${\cal Y}_\xi \subset {\cal Y}$ for $\xi=1,\ldots,r$.
When the initial state is $Y_1 \in {\cal Y}_\xi$, then $Y_2, Y_3,\ldots$ remain in the same irreducible class ${\cal Y}_\xi$.
Thus, for the weight 
\begin{align*}
w(\xi) = \Pr\big( Y_1 \in {\cal Y}_\xi \big)
\end{align*}
of each irreducible class induced from the initial distribution $P_{Y_1}$, the Markov chain $Y^n$ can be regarded as 
a mixture of irreducible Markov chains, i.e.,
\begin{align*}
\Pr\big( Y^n = y^n \big) = \sum_{\xi=1}^r w(\xi) \Pr\big( Y^n = y^n | Y_1 \in {\cal Y}_\xi \big).
\end{align*}
Let $\pi_\xi$ be the stationary distribution of $V_\xi$, and let 
\begin{align*}
H^{V_\xi}(Y) := \sum_{y,y^\prime \in {\cal Y}_\xi} \pi_\xi(y^\prime) V_\xi(y|y^\prime) \log \frac{1}{V_\xi(y|y^\prime)}
\end{align*}
be the entropy rate of $\xi$-th irreducible class.
Then, by the argument in \cite[Sec.~1.4]{han:book}, the information spectral quantities and the entropy 
rate are given as follows (see also Fig.~\ref{Fig:Markov-spectrum}):\footnote{When transition matrix $V$ has transient class, the 
information spectral quantities and the entropy rate are given by the same formulae; however, weight $w(\xi)$ is determined as
the limiting probability such that the initial state is eventually absorbed into irreducible class ${\cal Y}_\xi$ (cf.~\cite[Chapter 8]{Meyer:book}).}
\begin{align*}
\overline{H}(\bm{Y}) &= \max\big\{ H^{V_\xi}(Y) : 1 \le \xi \le r, w(\xi) > 0 \big\}, \\
\underline{H}(\bm{Y}) &= \min\big\{ H^{V_\xi}(Y) : 1 \le \xi \le r, w(\xi) > 0 \big\}, \\
H(\bm{Y}) &= \sum_{\xi=1}^r w(\xi) H^{V_\xi}(Y).
\end{align*}

\begin{figure}[t]
\centering{
\includegraphics[width=0.4\textwidth]{./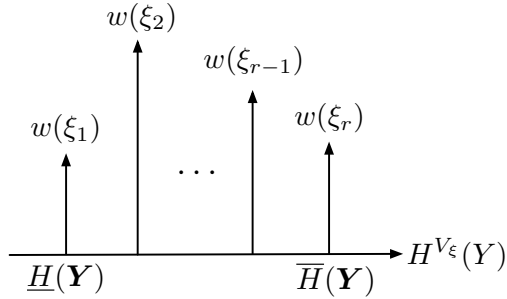}
\caption{Information spectrum of reducible Markov chain.}
\label{Fig:Markov-spectrum}
}
\end{figure}

From the above arguments along with Corollary \ref{corollary:asymptotic-stopping-overflow} and Corollary \ref{corollary:average-asymptotic}, 
we have 
\begin{align*}
R_{\mathtt{int}}^\star(\bm{X}, \bm{Y}) = R^\star(\bm{X},\bm{Y}) = \frac{1}{H^W(X)} \max\big\{ H^{V_\xi}(Y) : 1 \le \xi \le r, w(\xi) > 0 \big\}
\end{align*}
and
\begin{align*}
L_{\mathtt{int}}^\star(\bm{X}, \bm{Y}) = L^\star(\bm{X},\bm{Y}) = \frac{1}{H^W(X)} \sum_{\xi=1}^r w(\xi) H^{V_\xi}(Y).
\end{align*}

\subsection{Target Process with Continuous Spectrum}

As a coin process, we again consider a Markov chain $\bm{X}=\{X^m\}_{m=1}^\infty$ on ${\cal X}$ 
induced by an irreducible transition matrix $W$. 
As we have seen in Section \ref{subsec:markov-coin-target}, 
the spectral sup-entropy and inf-entropy coincide with the entropy rate, and
they are given by $H^W(X)$.

Let $\{ V_\xi \}_{\xi \in \Xi}$ be a parametrized family of irreducible matrix on ${\cal Y}$,
and let
\begin{align*}
P_{Y^n}(y^n) = \int P_{Y_\xi}^n(y^n) dw(\xi)
\end{align*} 
be the mixture of Markov process with arbitrary weight $w(\xi)$, where 
\begin{align*}
P_{Y_\xi^n}(y^n) = P_{Y_{\xi,1}}(y_1) \prod_{i=2}^n V_{\xi}(y_i | y_{i-1}).
\end{align*}
Then, for the target process $\bm{Y}=\{Y^n\}_{n=1}^\infty$, we have\footnote{For a measurable function
$Z_\xi$ of $\xi$, the {\em essential supremum} with respect to $w(\xi)$ is defined as 
$w\mbox{-ess.~sup } Z_\xi := \inf\{ \alpha : \Pr\{ Z_\xi > \alpha \} = 0$.} (cf.~\cite[Theorem 1.4.3]{han:book})
\begin{align*}
\overline{H}(\bm{Y})= w\mbox{-ess.~sup } H^{V_\xi}(Y)
\end{align*}
and (cf.~\cite[Remark 1.7.3]{han:book})
\begin{align*}
H(\bm{Y})= \int H^{V_\xi}(Y) dw(\xi).
\end{align*}

For the above described coin process and target process, Corollary \ref{corollary:asymptotic-stopping-overflow}
and Corollary \ref{corollary:average-asymptotic} immediately provide 
\begin{align*}
R^\star_{\mathtt{int}}(\bm{X},\bm{Y}) = R^\star(\bm{X},\bm{Y}) = \frac{1}{H^{W}(X)} w\mbox{-ess.~sup } H^{V_\xi}(Y)
\end{align*}
and
\begin{align*}
L^\star_{\mathtt{int}}(\bm{X},\bm{Y}) = L^\star(\bm{X},\bm{Y}) =  \frac{1}{H^{W}(X)} \int H^{V_\xi}(Y) dw(\xi).
\end{align*}

\section{Connection to Fixed-Length Random Number Generation} \label{Sec:connection}

\textchange{In this section, we shall point out a connection between the problem of fixed-length random number generation (FL-RNG),
and the variable-length random number generation (VL-RNG).\footnote{\textchange{Here, we fix the length of target random variables, and consider
RNG algorithms with fixed/variable length of coin random variables.}} As in the previous sections, let $\bm{X}=\{ X^m \}_{m=1}^\infty$ and $\bm{Y}= \{ Y^n \}_{n=1}^\infty$ be the coin and target processes.
An FL-RNG algorithm is described by a deterministic function $\psi: {\cal X}^m \to {\cal Y}^n$, and the approximation error is defined by
\begin{align*}
\Delta_m := \| P_{\tilde{Y}^n} - P_{Y^n} \|_1
\end{align*}
for $\tilde{Y}^n = \psi(X^m)$, where $\| P- Q \|_1 := \frac{1}{2} \sum_x |P(x)-Q(x)|$ is the variational distance
between two distributions $P$ and $Q$. }

\textchange{In the problem of source coding, it is recognized that
there is an intimate connection between the error probability of almost lossless 
fixed-length (FL) code and the overflow probability of the code length of variable-length (VL) code (eg.~see \cite{MerNeu:92, UchHan:01, kontoyiannis-verdu:14}).
More specifically, for a given VL code, we can construct a FL code such that the error probability is the same as
the overflow probability of the original VL code; and vice versa. 
In a similar vein, we can convert a given VL-RNG algorithm to an FL-RNG algorithm as follows.}
\textchange{
\begin{proposition} \label{proposition:connection-VL-FL}
For a given VL-RNG algorithm $\phi$ satisfying \eqref{eq:validity}, there exists an FL-RNG algorithm $\psi$ such that the approximation error satisfies 
\begin{align*}
\Delta_m \le \Pr\big( T > m \big),
\end{align*}
where $T$ is the stopping time of $\phi$.
\end{proposition}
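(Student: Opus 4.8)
The plan is to convert the variable-length algorithm $\phi$ into a fixed-length map $\psi \colon {\cal X}^m \to {\cal Y}^n$ by \emph{freezing} $\phi$ at time $m$: fix an arbitrary reference symbol $y_\ast^n \in {\cal Y}^n$ and set
\begin{align*}
\psi(x^m) := \begin{cases} \phi(x^m) & \text{if } \phi(x^m) \in {\cal Y}^n, \\ y_\ast^n & \text{if } \phi(x^m) = \bot. \end{cases}
\end{align*}
Thus $\psi$ reproduces the output of $\phi$ on those coin sequences for which the algorithm has already halted by time $m$ (the event $\{T \le m\}$), and emits the default symbol $y_\ast^n$ precisely on the overflow event $\{T > m\} = \{\phi(X^m) = \bot\}$.

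The first key step is the one-sided bound $\Pr\big( \phi(X^m) = y^n \big) \le P_{Y^n}(y^n)$, valid for every $y^n$ and every $m$. I would argue this from the tree structure: the event $\{\phi(X^m) = y^n\}$ occurs exactly when some leaf $s \in {\cal L}_\phi$ with $\phi(s)=y^n$ and $|s| \le m$ is a prefix of $x^m$, and since the leaves are prefix-free at most one such leaf is a prefix. Summing $P_{X^{|s|}}(s)$ over this subset of the leaves therefore yields a quantity no larger than $\sum_{s \in {\cal L}_\phi \colon \phi(s)=y^n} P_{X^{|s|}}(s)$, which equals $P_{Y^n}(y^n)$ by the validity \eqref{eq:validity}.

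The second step assembles the variational distance. Writing $\tilde{Y}^n = \psi(X^m)$, I note that for every $y^n \neq y_\ast^n$ the default assignment never fires, so $P_{\tilde{Y}^n}(y^n) = \Pr(\phi(X^m)=y^n) \le P_{Y^n}(y^n)$, whereas $P_{\tilde{Y}^n}(y_\ast^n) = \Pr(\phi(X^m)=y_\ast^n) + \Pr(T>m)$. Using the positive-part representation of the variational distance, $\| P_{\tilde{Y}^n} - P_{Y^n} \|_1 = \sum_{y^n \colon P_{\tilde{Y}^n}(y^n) > P_{Y^n}(y^n)} \big( P_{\tilde{Y}^n}(y^n) - P_{Y^n}(y^n) \big)$, only the single term $y^n = y_\ast^n$ can contribute, and its contribution is $\Pr(\phi(X^m)=y_\ast^n) + \Pr(T>m) - P_{Y^n}(y_\ast^n) \le \Pr(T>m)$ by the one-sided bound. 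Hence $\Delta_m \le \Pr(T>m)$.

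The only genuine subtlety, and the step I would verify most carefully, is the one-sided inequality $\Pr(\phi(X^m)=y^n) \le P_{Y^n}(y^n)$ for an \emph{arbitrary} valid algorithm rather than the interval algorithm in particular; its proof relies solely on the prefix-free structure of ${\cal L}_\phi$ together with \eqref{eq:validity}, and once this is in hand the remainder is routine bookkeeping. A secondary point worth stating explicitly is the use of the positive-part form of the total variation, which localizes the entire discrepancy onto the default symbol $y_\ast^n$ and makes the overflow probability appear directly as the bound.
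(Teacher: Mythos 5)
Your proposal is correct and follows essentially the same route as the paper: the same freezing construction with a default output symbol, the same one-sided bound $P_{\tilde{Y}^n}(y^n)\le P_{Y^n}(y^n)$ for $y^n$ other than the default (obtained from the prefix-free leaf set and \eqref{eq:validity}), and the same use of the positive-part form of the variational distance to localize the discrepancy onto the default symbol. The only cosmetic difference is that the paper indexes everything through the leaf set ${\cal B}=\{s\in{\cal L}_\phi : |s|\le m\}$ rather than through the event $\{\phi(X^m)=y^n\}$, but the computations are identical.
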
 
\begin{proof}
For the set ${\cal L}_\phi$ of all leaves, let ${\cal B}=\{ s \in {\cal L}_\phi : |s| \le m\}$.
Recall that, by our convention, we denote $\phi(x^m) = y^n$ if the algorithm terminates with output
$\phi(x^i)=y^n$ for some $i \le m$, and $\phi(x^m) = \bot$ otherwise. 
By using these notations, we set 
\begin{align*}
\psi(x^m) = \left\{
\begin{array}{ll}
\phi(x^m) & \mbox{if } \phi(x^m) \in {\cal Y}^n \\
b^n & \mbox{else}
\end{array}
\right.
\end{align*}
where $b^n \in {\cal Y}^n$ is an arbitrarily fixed sequence. 
Then, since 
\begin{align*}
P_{\tilde{Y}^n}(b^n) &= \sum_{s \in {\cal B} : \atop \phi(s)=b^n} P_{X^{|s|}}(s) + \sum_{s \in {\cal L}_\phi \backslash {\cal B}} P_{X^{|s|}}(s) \\
&\ge \sum_{s \in {\cal L}_\phi : \atop \phi(s)=b^n} P_{X^{|s|}}(s) \\
&= P_{Y^n}(b^n)
\end{align*}
and 
\begin{align*}
P_{\tilde{Y}^n}(y^n) &= \sum_{s \in {\cal B} : \atop \phi(s)=y^n} P_{X^{|s|}}(s) \\
&\le   \sum_{s \in {\cal L}_\phi : \atop \phi(s)=y^n} P_{X^{|s|}}(s) \\
&= P_{Y^n}(y^n)
\end{align*}
for every $y^n \neq b^n$, we have
\begin{align*}
\Delta_m &= P_{\tilde{Y}^n}(b^n) - P_{Y^n}(b^n) \\
&= \sum_{s \in {\cal L}_\phi \backslash {\cal B}: \atop \phi(s) \neq b^n} P_{X^{|s|}}(s) \\
&\le \sum_{s \in {\cal L}_\phi \backslash {\cal B}} P_{X^{|s|}}(s) \\
&= \Pr\big( T > m \big),
\end{align*}
where the first equality follows from an alternative expression of the variational distance (eg.~see \cite[Eq.~(11.137)]{cover}).
\end{proof}
}

\textchange{
As we can find from the proof of Proposition \ref{proposition:connection-VL-FL}, we can convert any VL-RNG algorithm to a FL-RNG algorithm 
just by stopping the VL-RNG algorithm after a prescribed number of coin tosses. In fact, the achievability bound for the FL-RNG \cite[Lemma 2.1.1]{han:book} 
can be also attained by the modified version of the interval algorithm via Proposition \ref{proposition:connection-VL-FL} and Theorem \ref{theorem:performance-interval} up to a negligible constant factor;
in the asymptotic regime, if we set $m = nR$ with $R > \overline{H}(\bm{Y})/\underline{H}(\bm{X})$, then Theorem \ref{theorem:stopping-interval-asymptotic} guarantees that
the approximation error $\Delta_m$ of the FL-RNG converges to $0$ (cf.~Theorem \cite[Theorem 2.1.1]{han:book}).
Conversely, even though we proved the converse bound for the VL-RNG (Theorem \ref{theorem:converse}) directly in Section \ref{sec:problem}, we can provide an alternative proof
by combining Proposition \ref{proposition:connection-VL-FL} and the converse bound for the FL-RNG in \cite[Lemma 2.1.2]{han:book};
in the asymptotic regime, the converse bound in Theorem \ref{theorem:stopping-interval-asymptotic}, i.e., $R \ge \max[\overline{H}(\bm{Y})/\overline{H}(\bm{X}), \underline{H}(\bm{Y})/\underline{H}(\bm{X})]$
for every achievable rate $R$, can be obtained from Proposition \ref{proposition:connection-VL-FL} and \cite[Theorem 2.1.2]{han:book}.
Unlike the source coding, 
the opposite claim, i.e., possibility of converting a FL-RNG to a VL-RNG,
is not clear in general. }
 
\section{Discussion} \label{sec:discussion}

In this paper, we revisited the problem of exactly generating a random process with another random process,
and proved the optimality of the interval algorithm when either the coin or the target process has 
one point spectrum. However, when both the coin and the target processes have spreading spectrum,
the achievability and the converse bounds derived in this paper do not coincide.
At least, there is room for improvement on the achievability bound;
for instance, when the coin process and the target process are identical and have spreading spectrum,
the interval algorithm apparently attains the unit rate but the upper bounds in Theorem \ref{theorem:stopping-interval-asymptotic} and Theorem \ref{theorem:asymptotic-average} are loose. 
In order to derive tighter bounds, instead of the upper and lower limits of the spectrums, we need to
analyze spreading spectrums more carefully. 
For the random number generation with approximation error, such a direction of
research was conducted in \cite{NagMiy:96, AltWag:12}.

\textchange{In a similar vein, the bounds derived in this paper may not be tight for finite block length regime in general.
When either the coin process or the target process is unbiased  and the other process is i.i.d.,
by an application of the central limit theorem to the bounds in Theorem \ref{theorem:converse} and Theorem \ref{theorem:performance-interval},
we can derive bounds that coincide up to the so-called second-order rate \cite{hayashi:09, polyanskiy:10}.
In other words, the interval algorithm is optimal up to the second-order rate in that case.
It is an important research direction to conduct the finite block length analysis of the interval algorithm
when both the coin and target processes are biased. 
It should be noted that, when the coin process is i.i.d., the average stopping time of the interval algorithm
is known to be tight up to ${\cal O}(1)$ term \cite{HanHoshi97}.}

Another important research direction is the interval algorithm with finite precision arithmetic.
In order to implement the interval algorithm, the updates of intervals must be conducted with
finite precision arithmetic in practice. Such a direction of research was conducted in \cite{UyeLi:03} for i.i.d. processes. 




\bibliographystyle{./IEEEtranS}
\bibliography{../../../09-04-17-bibtex/reference}

\end{document}